\theoremstyle{definition}
\newtheoremstyle{mytheorem}
  {3pt}
  {3pt}
  {\itshape}
  {}
  {\itshape\bfseries}
  {.}
  {.5em}
  {\thmname{#1}\thmnumber{ #2} \thmnote{ {\the\thm@notefont(#3)}}}
\theoremstyle{mytheorem}
\newtheorem{theorem}{Theorem}
\newtheorem{lemma}[theorem]{Lemma}
\newtheorem{proposition}[theorem]{Proposition}
\theoremstyle{remark}
\newtheorem*{remark}{Remark}
\renewcommand*\env@matrix[1][*\c@MaxMatrixCols c]{%
  \hskip -\arraycolsep
  \let\@ifnextchar\new@ifnextchar
  \array{#1}}
\newcommand\Tstrut{\rule{0pt}{2.6ex}}         
\newcommand\Bstrut{\rule[-0.9ex]{0pt}{0pt}}   
\newcommand\blfootnote[1]{%
  \begingroup
  \renewcommand\thefootnote{}\footnote{#1}%
  \addtocounter{footnote}{-1}%
  \endgroup
}
\newcommand{\norm}[1]{\left\lVert#1\right\rVert}
\DeclareMathOperator*{\argmin}{arg\,min} 
\begin{document}
\title{Beam Discovery Using Linear Block Codes for Millimeter Wave Communication Networks}

\author{\IEEEauthorblockN{Yahia Shabara, C. Emre Koksal and Eylem Ekici}
\IEEEauthorblockA{Department of Electrical and Computer Engineering\\
The Ohio State University, Columbus, Ohio 43210\\
Email: \{shabara.1, koksal.2, ekici.2\}@osu.edu}
}
\maketitle



\IEEEpeerreviewmaketitle

\begin{abstract}
The surge in mobile broadband data demands is expected to surpass the available spectrum capacity below $6$ GHz. This expectation has prompted the exploration of millimeter wave (mm-wave) frequency bands as a candidate technology for next generation wireless networks. However, numerous challenges to deploying mm-wave communication systems, including channel estimation, need to be met before practical deployments are possible.
This work addresses the mm-wave channel estimation problem and treats it as a beam discovery problem in which locating beams with strong path reflectors is analogous to locating errors in linear block codes.
We show that a significantly small number of measurements (compared to the original dimensions of the channel matrix) is sufficient to reliably estimate the channel. We also show that this can be achieved using a simple and energy-efficient transceiver architecture.
\blfootnote{Eylem Ekici is supported in part by NSF grants CNS-1421576 and CNS-1731698.
C. Emre Koksal is supported in part by NSF grants CNS-1618566, CNS-1514260.}
\end{abstract}

\section{Introduction}
\label{Intro}
We investigate the problem of channel estimation in millimeter wave (mm-wave) wireless communication networks.
Mm-wave refers to the wavelength of electromagnetic signals at 30-300 GHz frequency bands.
At these high frequencies, channel measurement campaigns revealed that wireless communication channels exhibit very limited number of scattering clusters in the angular domain \cite{akdeniz2014millimeter, rangan2014millimeter, anderson2004building}. A \textit{cluster} refers to a propagation path or continuum of paths that span a small interval of transmit Angles of Departure (AoD) and receive Angles of Arrival (AoA).
Moreover, signal attenuation is very significant at mm-wave frequencies.
This motivates the use of large antenna arrays at the transmitter (TX) and receiver (RX) to provide high antenna gains that compensate for high path losses \cite{rappaport2013millimeter}.
Nevertheless, due to the high power consumption of mixed signal components, e.g., Analog to Digital Converters (ADCs) \cite{DSPforMmWave2016}, conventional digital transceiver architectures that employ a complete RF chain per antenna is not practical.
%
Hence, alternate architectures have been proposed for mm-wave radios with the objective of maintaining a close performance to channel capacity. Among the proposed solutions are the use of i) hybrid analog/digital beamforming \cite{alkhateeb2014channel,mendez2015channel,han2015large} and ii) fully digital beamforming with low resolution ADCs \cite{ImpactOfResOnPerf2015,highSNRcapacitySingleBitADC2014,AdaptiveOneBit_Rusu2015}.

For all proposed solutions, \textit{channel estimation} remains one of the most critical determinants of performance in communication.
Due to the large number of antennas at TX and RX,
estimation of the full channel gain matrix
may require a large number of measurements, proportional to the product of the number of transmit and receive antennas. This imposes a great burden on the estimation process. To address this issue, various methods have been used, the most prevalent among them, is compressed sensing \cite{Bajwa2010_compressedChannelSensing, ChEstSunAdaptiveCS2017, alkhateeb2014channel,ChEstScniter14,AdaptiveOneBit_Rusu2015}, which leverages channel sparsity.
Performance of compressed sensing based approaches is heavily dependent on the design of system (sensing) matrices. For instance, while random sensing matrices are known to perform well, in practice, sensing matrices involve the design of transmit and receive beamforming vectors and the choice of dictionary matrices\footnote{A dictionary matrix is used to express the channel in a sparse form.}.
Hence, purely random matrices have not been used in practice \cite{CS4Wireless_TipsAndTricks}. On the other hand, no design that involves deterministic sensing matrices has been considered for sparse channel estimation.

Despite the efforts, we do not have a full understanding of the dependence of channel estimation performance on the channel parameters and number of measurements.
In an effort to understand this relationship, the study in \cite{Alkhateeb_2015_HowManyMeasurements} proposed a multi-user mm-wave downlink framework based on compressed sensing in which the authors evaluate the achievable rate performance against the number of measurements.


In this work, we follow a different approach. We propose a systematic method in which we use sequences of error correction codes chosen in a way to control the channel estimation performance.
To demonstrate our approach, consider the following simple example.
Let a point to point communication channel be such that, there exists 3 possible receive AoA directions, only one of which may have a strong path to TX. We need to obtain the correct AoA at RX, if it exists.
Instead of exhaustively searching all 3 possible AoA directions, we alternatively measure signals from combined directions. For instance, by combining directions 1\&2 in one measurement and 2\&3 in the next measurement, we can find the AoA in just two measurements.
Specifically, four different scenarios might occur, namely, i) only the $1^{\text{st}}$, or ii) only the $2^{\text{nd}}$ measurement contains a strong path, iii) both $1^{\text{st}}$ and $2^{\text{nd}}$ measurements contain a strong path, and finally, iv) neither measurement reveals a strong path.
Interpretation of those cases is: AoA is in  i) direction 1, ii) direction 3, iii) direction 2, and iv) none exists.
Therefore, only 2 measurements are sufficient for beam detection instead of 3 that are needed for exhaustive search.


We will generalize this idea to develop a systematic method for beam detection, inspired by linear block coding.
Specifically, we show that linear block error correcting codes (LBC) possess favorable properties that fit in with the desirable behavior of sparse channel estimation. As a result, we are able to \textbf{i) provide rigorous criteria for solving the channel estimation problem}, \textbf{ii) significantly decrease the number of required measurements}, and \textbf{iii) utilize a fairly simple and energy-efficient transceiver architecture.}
We design the system using LBCs that leverage the fact that transmission errors are typically sparse in transmitted data streams, and hence, only a few number of erroneous bits need to be corrected per transmitted codeword.
Similarly, mm-wave channels are also sparse, i.e., only a small number of AoAs/AoDs carry strong signals.
LBCs can correct sparse transmission errors by identifying their location in a transmitted sequence (followed by flipping them).
We are inspired by LBC's ability to locate erroneous bits and exploit it to identify the AoAs/AoDs that carry strong signals (and their path gains) among all possible AoA/AoD values.
To this end, we exploit hard decision decoding of LBCs, in which the receiver obtains an \textit{error syndrome} that maps to one of the correctable error patterns.
An obtained error pattern determines the positions where errors have occurred.
Likewise, for channel estimation, the receiver will be designed to do a sequence of measurements that would result in a \textit{channel syndrome}. The resultant channel syndrome shall identify the positions (and values) of non-zero angular channel components.

Contributions of this work can be summarized as follows:
\begin{itemize}
\item	We set an analogy between beam discovery and channel coding to utilize low-complexity decoding techniques for efficient beam discovery.
\item	We provide rigorous criteria for setting the number of channel measurements based on the size of the channel and its sparsity level.
\item   We show that the number of measurements required for beam discovery is linked to the rate of a used linear block code. Hence, maximizing the rate of the underlying code is equivalent to minimizing the number of measurements.
\item	We develop a simple receiver architecture that enables us to measure signals arriving from multiple directions.
\end{itemize}

\textbf{Related Work:}
The main objective of mm-wave channel estimation is to find a mechanism that can reliably estimate the channel using as few measurements as possible.
For instance, in \cite{alkhateeb2014channel}, a compressed sensing based algorithm to estimate single-path channels is proposed and an upper bound on its estimation error is derived. Further, the authors propose a multipath channel estimation algorithm based on that of single-path channels. The proposed algorithms in \cite{alkhateeb2014channel} use an adaptive approach with a hierarchical codebook\footnote{A codebook refers to the set of all possible beamforming vectors.}
of increasing resolution. Similarly, the work in \cite{ChEstSunAdaptiveCS2017} proposes an adaptive compressive sensing channel estimation algorithm that accounts for off-the-grid AoAs and AoDs by using continuous basis pursuit \cite{CBP_2011} dictionaries.
Such adaptive algorithms divide the estimation process into stages and demand frequent feed back to the TX after each stage.
Hence, while the number of required measurements are shown to decrease, these methods may add a considerable overhead.

Other works like \cite{AgileMmWave_arxiv17,AgileMmWave_HotNets16} and \cite{RACE_2017} have proposed channel estimation algorithms using overlapped beam patterns.
For instance, the algorithm in \cite{RACE_2017} can estimate multipath channel components by sequentially estimating each path gain using an algorithm designed to estimate single-path channels followed by recursively removing the estimated paths' effect from subsequent measurements. Similar to \cite{alkhateeb2014channel,ChEstSunAdaptiveCS2017} adaptive beams with increasing resolution that require feedback to TX are used to refine the AoA/AoD estimates.
On the other hand, the beam alignment algorithms proposed in \cite{AgileMmWave_arxiv17,AgileMmWave_HotNets16} assume a multipath mm-wave channel. These algorithms, with a high probability, can find the best beam alignment in a logarithmic number of measurements (with respect to the total number of available AoA directions). Nonetheless, despite the possible existence of multiple paths, those algorithms are designed to find one path to TX.

Exploiting the results of previous beam alignment operations could be used to reduce the overhead of subsequent alignments.
For instance, assuming that successive beam alignments are statistically correlated, the authors in \cite{hashemi2017efficient}
use this contextual information to improve beamforming delay via Multi-Armed Bandit based models.

Most research efforts in the field of mm-wave channel estimation use the magnitude and phase information of the acquired channel measurements. Nevertheless, if a carrier frequency offset (CFO) error occurs in the transceiver hardware, the phase information might be unreliable. Hence, the work in \cite{AgileMmWave_arxiv17,AgileMmWave_HotNets16,NonCoherentPathTrack_17} tackle this problem by ignoring the phase information.
Similar to \cite{AgileMmWave_arxiv17,AgileMmWave_HotNets16}, the solution in \cite{NonCoherentPathTrack_17} can only obtain one (dominant) path between TX and RX using a compressed sensing based technique.
The CFO problem is tackled in \cite{myers2017compressive} by considering it as a variable to be estimated.

While the power consumption problem of mmwave systems is commonly alleviated using analog or hybrid beamforming transceivers, an alternative solution is to use low resolution ADCs in fully digital architectures.
Owing to the fact that low resolution ADCs operate at much lower power than their high resolution counterparts, the work in \cite{ImpactOfResOnPerf2015,highSNRcapacitySingleBitADC2014,AdaptiveOneBit_Rusu2015,alkhateeb2014mimo}
employ low resolution (single-bit) ADCs in digital transceivers.
The work in \cite{ChEstScniter14,barati2015directional} study the channel estimation problem using such architectures. 
Moreover, other solutions include integrated mm-wave and sub-6 GHz systems \cite{hashemi2017energy} to provide reliable and energy efficient communication systems.

\subsubsection*{Notations} A vector and a matrix are denoted by $\boldsymbol{x}$ and $\boldsymbol{X}$, respectively, while $x$ denotes a scalar or a complex number depending on the context. The transpose, conjugate transpose and frobenius norm of $\boldsymbol{X}$ are given by $\boldsymbol{X}^T$, $\boldsymbol{X}^H$ and $\left\Vert \boldsymbol{X} \right\Vert_F$, respectively.
The sets of real and complex numbers are $\mathbb{R}$ and $\mathbb{C}$.
The $k {\times} k$ identity matrix is $\boldsymbol{I_k}$.
A set is denoted by $\mathcal{X}$, while $|\mathcal{X}|$ is its cardinality. Finally, $\mathbbm{1}()$ is the indicator function.

\section{Motivating Example}
\label{illustrative}
To elaborate, we present the following example:
consider a point to point communication link between a TX with single antenna ($n_t = 1$) and RX with $n_r = 15$ antennas. Therefore, the vector of channel gains\footnote{Let all the channels have one single significant tap.},
$\boldsymbol{q}$, is a $15 {\times} 1$ vector,
and its corresponding angular (virtual) channel, $\boldsymbol{q}^a$, is a vector of the same size and can be derived using the DFT matrix $\boldsymbol{U_r}$ as $\boldsymbol{q}^a =\boldsymbol{U}^H_{\boldsymbol{r}} \: \boldsymbol{q}$ \cite{sayeed2002deconstructing} (this is merely a linear transformation that maps the sequence of channel gains into a sequence of gains from different AoAs. This mapping will be presented in more detail in Section \ref{SystemModel}).
Assume a single-path channel, i.e., the channel has only one cluster with a single path in it.
Let the path gain be denoted by $\alpha$. For simplicity assume $\alpha = 1$.
Further, let us assume perfect sparsity such that the AoA is along one of the directions defined in the DFT matrix $\boldsymbol{U_r}$, i.e., the channel path will only contribute to one angular bin.
Finally, let us also neglect the channel noise.

\begin{figure}[t]
\centering
\includegraphics[width=0.5\linewidth]{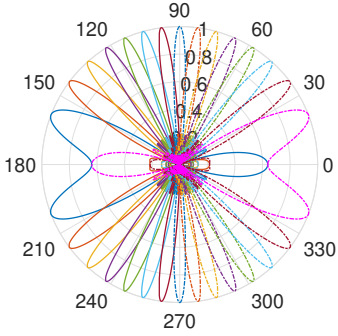}
\caption{\small Beam patterns of all possible angular directions}
\label{fig:AllBeams} 
\end{figure}

Based on the channel description above, we get an angular channel vector of the form
\begin{equation}
\boldsymbol{q}^a = 
 \begin{pmatrix}
  q^a_0 & q^a_1 &  \hdots &  q^a_{14}
 \end{pmatrix}^T,
\end{equation}
such that $q^a_i \in \{0,1\}$ and the number of non-zero elements in $\boldsymbol{q}^a$ is $1$. Any component of $\boldsymbol{q}^a$ can be measured using one of the beam patterns shown in Fig. \ref{fig:AllBeams}.

\textbf{Objective:}
Suppose the transmitter sends pilot symbols of the form $x {=} 1$. Thus, the received vector $\boldsymbol{y}$ of size $15 {\times} 1$ can be obtained as
\begin{equation}
\boldsymbol{y}   =  \boldsymbol{q} x   =  \boldsymbol{q} \Longleftrightarrow \boldsymbol{y}^a    =  \boldsymbol{q}^a
\end{equation}
where $\boldsymbol{y}^a$ is the received vector in the angular domain. So, with change of basis, we can think of $\boldsymbol{q}^a$ as a received sequence with just one non-zero component. To identify the position of this non-zero component, the receiver performs a sequence of channel measurements. Let $y_{s_i}$ denote the $i^{th}$ measurement such that
\begin{equation}
y_{s_i} = \boldsymbol{w}_i^H \boldsymbol{y} = \boldsymbol{w}_i^H \boldsymbol{q},
\end{equation}
where $\boldsymbol{w}_i$ denotes the $i^{th}$ receive (rx-)combining vector.

Our aim is to design channel measurements (i.e., $\boldsymbol{w_i}$'s) such that the correct AoA is identified using the minimum number of measurements.

\textbf{Proposed Solution:}
We consider this non-zero component to be an anomaly to a normally all-zero $15$-bin angular channel.
Hence, the goal of identifying its position is analogous to finding the most likely $1$-bit error pattern of a $15$-bit codeword in a linear block code.
Now, we need to identify an error correction code with codewords of length $15$ and with $1$-bit error correction capability \cite{van2012introduction}. Hence, we can use the binary $(15,11,3)$ Hamming code with parity check matrix $\boldsymbol{H}$ of size $4 {\times} 15$ and given by
\begin{equation}
  \boldsymbol{H}{=}
  \begin{pmatrix}
  1 & 0 & 0 & 0 & 1 & 0 & 0 & 1 & 1 & 0 & 1 & 0 & 1 & 1 & 1 \\
  0 & 1 & 0 & 0 & 1 & 1 & 0 & 1 & 0 & 1 & 1 & 1 & 1 & 0 & 0 \\
  0 & 0 & 1 & 0 & 0 & 1 & 1 & 0 & 1 & 0 & 1 & 1 & 1 & 1 & 0 \\
  0 & 0 & 0 & 1 & 0 & 0 & 1 & 1 & 0 & 1 & 0 & 1 & 1 & 1 & 1 \\
  \end{pmatrix}
\end{equation}
where $h_{i,j}$ represents the component at the intersection of row $i$ and column $j$ of $\boldsymbol{H}$. Using hard decision decoding of LBCs, error syndrome vectors of length $4$ are obtained. Every possible syndrome vector maps to only one correctable error pattern\footnote{A correctable error pattern of a ($15,11,3$) Hamming code is any $15{\times}1$ binary vector that contains only one '1' (at the error's position).}.
Similarly, for channel estimation, several measurements should be performed at RX where each measurement mimics the behavior of a corresponding element in the error syndrome vector. Each measurement boils down to adding signals from a subset of the available $15$ directions. Since each measurement can either include the direction of the incoming strong path of gain $\alpha=1$ or no strong paths at all, then the elements of the channel syndrome vector are in $\{0,1\}$.

\begin{figure*}[t]
\centering
\includegraphics[width=1\linewidth]{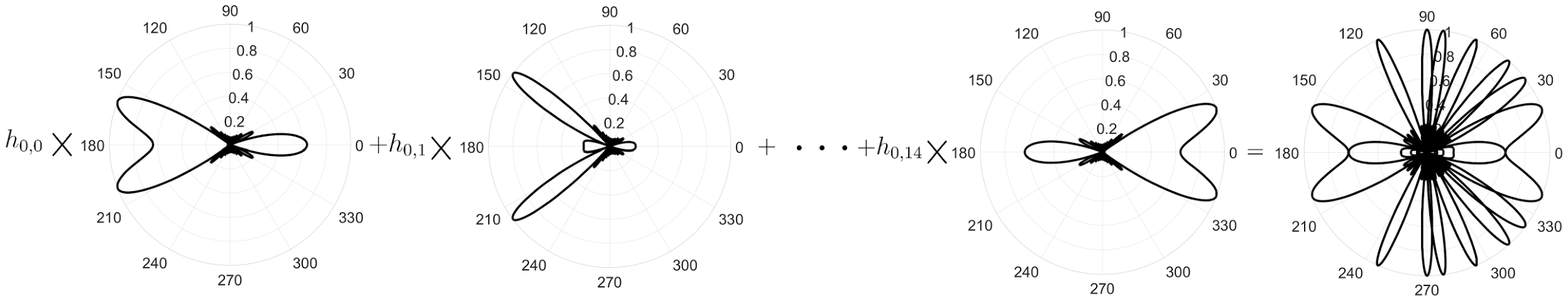}
\caption{\small Beam pattern of receive combining vector $\boldsymbol{w_0}$}
\label{fig:RxComb} 
\end{figure*}

For every measurement $y_{s_i}$, we design $\boldsymbol{w}_i$ based on the entries of the $i^{th}$ row of $\boldsymbol{H}$ such that: if $h_{i,j} = 1$, then we include the beam pattern that points to direction $j$ in $\boldsymbol{w}_i$.
For example, the $0^{th}$ row of $\boldsymbol{H}$ is given by $[1 \: 0 \: 0 \: 0 \: 1 \: 0 \: 0 \: 1 \: 1 \: 0 \: 1 \: 0 \: 1 \: 1 \: 1]$. Hence, $\boldsymbol{w}_0$ should include beam patterns pointing to the set of directions $\{0,4,7,8,10,12,13,14\}$.

Fig. \ref{fig:RxComb} illustrates this operation for $\boldsymbol{w}_0$. We can see that the resultant beam pattern of $\boldsymbol{w}_i$ combines signals coming from a set of selected directions dictated by the $i^{th}$ row of $\boldsymbol{H}$.
We call the obtained measurement vector, $\boldsymbol{y_s}$, the \textit{channel syndrome} which is analogous to error syndromes in hard decision decoding of LBCs.
Then, a table that maps every possible channel syndrome to a unique corresponding channel can be constructed.
Table \ref{table:SyndromeMap} shows this mapping.

\begin{table}[t]
\caption{\small Mapping of channel syndromes to angular channels}
\label{table:SyndromeMap}
\centering
 \begin{tabular}{||c | c||} 
 \hline
 Channel Syndrome $\boldsymbol{y}^T_{\boldsymbol{s}}$ & Angular Channel ${\boldsymbol{q}^a}^T$ \Tstrut \Bstrut \\ 
 \hline\hline
$[0 \: 0 \: 0 \: 0]$ & $[0 \: 0 \: 0 \: 0 \: 0 \: 0 \: 0 \: 0 \: 0 \: 0 \: 0 \: 0 \: 0 \: 0 \: 0]$\\ 
 \hline
$[1 \: 0 \: 0 \: 0]$ & $[1 \: 0 \: 0 \: 0 \: 0 \: 0 \: 0 \: 0 \: 0 \: 0 \: 0 \: 0 \: 0 \: 0 \: 0]$\\
 \hline
$[0 \: 1 \: 0 \: 0]$ & $[0 \: 1 \: 0 \: 0 \: 0 \: 0 \: 0 \: 0 \: 0 \: 0 \: 0 \: 0 \: 0 \: 0 \: 0]$\\
 \hline
$[0 \: 0 \: 1 \: 0]$ & $[0 \: 0 \: 1 \: 0 \: 0 \: 0 \: 0 \: 0 \: 0 \: 0 \: 0 \: 0 \: 0 \: 0 \: 0]$\\
 \hline
$[0 \: 0 \: 0 \: 1]$ & $[0 \: 0 \: 0 \: 1 \: 0 \: 0 \: 0 \: 0 \: 0 \: 0 \: 0 \: 0 \: 0 \: 0 \: 0]$\\
 \hline
$[1 \: 1 \: 0 \: 0]$ & $[0 \: 0 \: 0 \: 0 \: 1 \: 0 \: 0 \: 0 \: 0 \: 0 \: 0 \: 0 \: 0 \: 0 \: 0]$\\
 \hline
$[0 \: 1 \: 1 \: 0]$ & $[0 \: 0 \: 0 \: 0 \: 0 \: 1 \: 0 \: 0 \: 0 \: 0 \: 0 \: 0 \: 0 \: 0 \: 0]$\\
 \hline
$[0 \: 0 \: 1 \: 1]$ & $[0 \: 0 \: 0 \: 0 \: 0 \: 0 \: 1 \: 0 \: 0 \: 0 \: 0 \: 0 \: 0 \: 0 \: 0]$\\
 \hline
$[1 \: 1 \: 0 \: 1]$ & $[0 \: 0 \: 0 \: 0 \: 0 \: 0 \: 0 \: 1 \: 0 \: 0 \: 0 \: 0 \: 0 \: 0 \: 0]$\\
 \hline
$[1 \: 0 \: 1 \: 0]$ & $[0 \: 0 \: 0 \: 0 \: 0 \: 0 \: 0 \: 0 \: 1 \: 0 \: 0 \: 0 \: 0 \: 0 \: 0]$\\
 \hline
$[0 \: 1 \: 0 \: 1]$ & $[0 \: 0 \: 0 \: 0 \: 0 \: 0 \: 0 \: 0 \: 0 \: 1 \: 0 \: 0 \: 0 \: 0 \: 0]$\\
 \hline
$[1 \: 1 \: 1 \: 0]$ & $[0 \: 0 \: 0 \: 0 \: 0 \: 0 \: 0 \: 0 \: 0 \: 0 \: 1 \: 0 \: 0 \: 0 \: 0]$\\
 \hline
$[0 \: 1 \: 1 \: 1]$ & $[0 \: 0 \: 0 \: 0 \: 0 \: 0 \: 0 \: 0 \: 0 \: 0 \: 0 \: 1 \: 0 \: 0 \: 0]$\\
 \hline
$[1 \: 1 \: 1 \: 1]$ & $[0 \: 0 \: 0 \: 0 \: 0 \: 0 \: 0 \: 0 \: 0 \: 0 \: 0 \: 0 \: 1 \: 0 \: 0]$\\
 \hline
$[1 \: 0 \: 1 \: 1]$ & $[0 \: 0 \: 0 \: 0 \: 0 \: 0 \: 0 \: 0 \: 0 \: 0 \: 0 \: 0 \: 0 \: 1 \: 0]$\\
 \hline
$[1 \: 0 \: 0 \: 1]$ & $[0 \: 0 \: 0 \: 0 \: 0 \: 0 \: 0 \: 0 \: 0 \: 0 \: 0 \: 0 \: 0 \: 0 \: 1]$\\ 
\hline
\end{tabular}
\end{table}

In this example, we are able to estimate the channel based on only $4$ measurements as opposed to $15$, which is the number of measurements with exhaustive search.
Important aspects of our proposed method include the choice of codes, the design of precoding and rx-combining measurement vectors, the effect of variable gains and phases of different paths and the occurrence of measurement errors.

\begin{remark}[Receiver Architecture] Note that, to achieve beam patterns similar to the one shown in Fig. \ref{fig:RxComb}, the receiver architecture needs to be a bit different from those of classical analog/hybrid beamforming architectures. Specifically, in addition to low-noise amplifiers (LNA) typically placed at the output of each antenna, we will need to add controllable low-power amplifiers, as well. The resultant architecture is still quite simple (see Fig. \ref{fig:architecture}). That is, besides the low-power amplifiers, the proposed architecture is similar to those of simple analog beamforming. Moreover, relatively low-resolution ADCs can be used which mitigates the high power consumption problem associated with high-resolution ADCs.
\end{remark}

\textbf{Motivation for LBC-inspired approach:}
LBCs are designed to discover and correct a certain maximum number of errors in a codeword of a specified length. This objective is achieved by adding redundant \textit{parity check} bits to the original information sequence.
What makes our devised approach attractive is that the number of measurements needed for channel estimation can be shown to be equal to the number of parity bits of some corresponding code. Hence, we can control the estimation performance via appropriate code selection.
In this work, we will propose a method to specify the number of necessary channel measurements as a function of the rate of the underlying code.


\section{System Model}
\label{SystemModel}

\begin{figure}[t]
\centering
\includegraphics[width=1\linewidth]{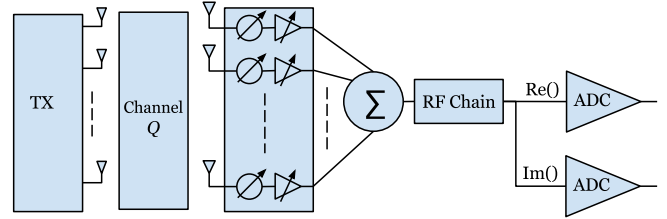}
\caption{\small Hardware Block Diagram: Every antenna is connected to a phase shifter and low-power variable gain amplifier. Then, all outputs are combined using an adder and passed to an RF chain with in-phase and quadrature channels.}
\label{fig:architecture} 
\end{figure}

Consider a point-to-point millimeter-wave wireless communication system with a transmitter (TX) equipped with $n_t$ antennas and a receiver (RX) with $n_r$ antennas placed at fixed locations. Uniform Linear Arrays (ULA) are assumed at both TX and RX where each antenna element is connected to a phase shifter and a variable gain amplifier. A single RF chain at the receiver, with in-phase (I) and quadrature (Q) channels, is fed through a linear combiner (see Fig. \ref{fig:architecture}). Only two mid-tread ADCs, with $2^b{+}1$ quantization levels, are utilized, where quantization levels take values from the set $\mathcal{Y} = \{-2^{b-1}, \dots,-1,0,1,\dots,2^{b-1}\}$.

We adopt a single-tap channel model where $\boldsymbol{Q} \in \mathbb{C}^{ n_r { \times } n_t}$ denotes the channel matrix between TX and RX.
Assume that the channel has $L$ clusters, where each cluster contains a single path with gain $\alpha_l$, AoD $\theta_l$, and AoA $\phi_l$.
The channel is assumed to be sparse such that $L \ll n_t, n_r$.
Let $\alpha_l^b \in \mathbb{C}$ denote the baseband channel gain and is defined as
\begin{equation}
\alpha_l^b {=} \alpha_l \: \sqrt{n_t n_r} \: e^{- j\frac{2 \pi \rho_l}{\lambda_c}}
\end{equation}
where $\rho_l$ is the length of path $l$ and $\lambda_c$ is the carrier wavelength.
The angular cosines of AoD and AoA associated with path $l$ are denoted by $\Omega_{tl}$ and $\Omega_{rl}$, respectively. The transmit and receive spatial signatures along the direction $\Omega$ are given by $\boldsymbol{e_t}(\Omega)$ and $\boldsymbol{e_r}(\Omega)$ such that
\begin{equation}
\boldsymbol{e_t}(\Omega) {=} \frac{1}{\sqrt{n_t}}
  \begin{pmatrix}
  1    							    	  \\
  e^{-j2\pi   \Delta_t \Omega}  	  \\
  e^{-j2\pi 2 \Delta_t \Omega}  	  \\
  \vdots								  \\
  e^{-j2\pi (n_t-1) \Delta_t \Omega}  \\
  \end{pmatrix},
\end{equation}
where  $\boldsymbol{e_r}(\Omega)$ has a similar definition to $\boldsymbol{e_t}(\Omega)$, and
$\Delta_t$ and $\Delta_r$ are the antenna separations at TX and RX normalized by the wavelength $\lambda_c$.
Let the average path loss be denoted by $\mu$.
Thus, 
$\boldsymbol{Q}$ can be written as
\begin{equation}
\boldsymbol{Q} = \sum_{l=1}^{L} \frac{\alpha_l^b}{\mu}  \: \boldsymbol{e_r}(\Omega_{rl}) \: \boldsymbol{e}^H_{\boldsymbol{t}}(\Omega_{tl}).
\end{equation}
We define $\boldsymbol{U_t}$ and $\boldsymbol{U_r}$ as the unitary Discrete Fourier Transform (DFT) matrices whose columns constitute an orthonormal basis for the transmit and receive signal spaces $\mathbb{C}^{n_t}$ and $\mathbb{C}^{n_r}$, respectively. $\boldsymbol{U_t}$ (and similarly $\boldsymbol{U_r}$) is given by
\begin{equation}
\boldsymbol{U_t} = 
  \begin{pmatrix}
  \boldsymbol{e_t}(0) & \boldsymbol{e_t}(\frac{1}{L_t}) & \dots & \boldsymbol{e_t}(\frac{n_t-1}{L_t})
  \end{pmatrix},
\end{equation}
where $L_t$ and $L_r$ are the normalized lengths of the transmit and receive antenna arrays such that $L_t = n_t \Delta_t$ and $L_r = n_r \Delta_r$.
Let $\boldsymbol{Q^a}$ be the channel matrix in the angular domain \cite{sayeed2002deconstructing}, where
\begin{equation} \label{AngularChannel}
\boldsymbol{Q^a} = \boldsymbol{U}^H_{\boldsymbol{r}} \boldsymbol{Q} \boldsymbol{U_t}.
\end{equation}
The rows and columns of $\boldsymbol{Q}^a$ divide the channel into resolvable RX and TX bins, respectively.
Further, we assume a perfect sparsity model in which AoDs $\theta_l$, and AoA $\phi_l$, are along the directions defined in $\boldsymbol{U_t}$ and $\boldsymbol{U_r}$ \cite{ChEstScniter14,alkhateeb2014channel,RACE_2017}. Hence, each channel cluster will only contribute to a single pair of TX and RX bins. Therefore, $\boldsymbol{Q}^a$ has a maximum of L non-zero TX and RX bins.

The baseband channel model is given by
\begin{equation}\label{eqn:baseBandModel}
\boldsymbol{y} = \boldsymbol{Q} \boldsymbol{x} + \boldsymbol{n},
\end{equation}
where $\boldsymbol{x} \in \mathbb{C}^{n_t}$ is the transmitted signal, $\boldsymbol{y} \in \mathbb{C}^{n_r}$ is the received signal and $\boldsymbol{n} \sim \mathcal{CN}(\boldsymbol{0} , N_0 \boldsymbol{I}_{n_r} )$
is an i.i.d. complex Gaussian noise vector.

Let $\boldsymbol{f} \in \mathbb{C}^{n_t} $ and $\boldsymbol{w} \in \mathbb{C}^{n_r}$ be the precoding and rx-combining vectors, respectively. The transmit signal $\boldsymbol{x}$ is given by $\boldsymbol{x} = \boldsymbol{f} s$ where $s$ is the transmitted symbol with average power $\mathbb{E}(ss^H) {=} P$. After the receiver applies the rx-combining vector $\boldsymbol{w}$, the resultant symbol $u$ can be written as
\begin{equation}\label{eqn:rxSymbol}
u = \boldsymbol{w}^H \boldsymbol{Q} \boldsymbol{f} s + \boldsymbol{w}^H \boldsymbol{n}.
\end{equation}
Afterwards, $u$ is passed forward to the ADCs. There, a quantized version, $u_s$, of $u$ is obtained such that
\begin{equation}\label{eqn:quantizedRxSymbol}
u_s = [\boldsymbol{w}^H \boldsymbol{Q} \boldsymbol{f} s + \boldsymbol{w}^H \boldsymbol{n}]_{+} \: ,
\end{equation}
where $[\: \cdot \:]_+$ represents the quantization function. Now, $u_s$ constitutes a single, quantized, unit measurement obtained using specific $\boldsymbol{f}$ and $\boldsymbol{w}$ vectors such that $u_s \in \mathcal{Y}^2 \subset \mathbb{C}$.

We assume that $\boldsymbol{Q}$ remains fixed throughout the entire estimation process. The noise component $\boldsymbol{w}^H \boldsymbol{n}$ normalized by $\left\Vert \boldsymbol{w} \right\Vert$ is also a complex gaussian random variable such that $\frac{\boldsymbol{w}^H}{\left\Vert \boldsymbol{w} \right\Vert} \boldsymbol{n} \sim \mathcal{CN}(0,N_0)$.
We define the signal to noise ratio (\textit{SNR}) on a per path basis such that \textit{SNR} of path $l$ is given by
\begin{equation}
\label{eqn:SNRl}
\textit{SNR}_l = \frac{P}{N_0} \left\lvert \frac{\alpha_l^b}{\mu} \right\rvert^2.
\end{equation}
Note that the actual received \textit{SNR} depends on all path gains included in a measurement.

\section{Problem Statement}
\label{ProblemStatement}

Suppose a maximum number of $L$ clusters need to be discovered in the channel where $L \ll n_t,n_r$. Under the prefect sparsity assumption, $\boldsymbol{Q}^a$ has a maximum of $L$ non-zero RX and TX angular bins.
Our objective is to identify the angular positions at which channel clusters exist and identify their path gain values using the \textbf{least possible number of measurements}. Let the number of measurements be $m$ such that each measurement, $u_{s_{i,j}}$, is obtained using the precoder $\boldsymbol{f_j}$ and rx-combiner $\boldsymbol{w_i}$. Let the number of rx-combiners and precoders be $m_1$ and $m_2$, respectively. Measurements take the form $u_{s_{i,j}} = [\boldsymbol{w_i}^H \boldsymbol{Q} \boldsymbol{f_j} s + \boldsymbol{w_i}^H \boldsymbol{n}]_+$. 
Let $\xi()$ be a mapping function that takes in the measurements $\{u_{s_{i,j}}\}_{\forall i,j}$ as inputs and returns the estimated channel $\widehat{\boldsymbol{Q}}^{\boldsymbol{a}}$.
For each $j$, we stack the measurements $\{u_{s_{i,j}}\}_{\forall i}$ in a single (syndrome) vector such that $\boldsymbol{u_{s_j}} = [u_{s_{0,j}} \: u_{s_{1,j}} \dots u_{s_{{{m_1}-1},j}}]^T$.
Our design variables are the precoding vectors $\boldsymbol{f_j}$, rx-combining vectors $\boldsymbol{w_i}$, the number of measurements $m$, the mapping function $\xi()$, and the transmitted symbol power $P$.

In its essence, solving this problem boils down to finding the optimal set of measurements $\{u_{s_{i,j}}\}_{\forall i,j}$ and the mapping function $\xi()$ such that $\boldsymbol{Q}^a$ can be estimated using the minimum number of measurements.
For ease of explanation, we first consider a channel with a single transmit antenna and $n_r$ receive antennas. Therefore, no precoding is needed and the design of measurements is reduced to designing the rx-combining vectors $\boldsymbol{w_i}$.
Recall that in the motivating example in Section \ref{illustrative}, we dealt with a special case of $n_r {\times} 1$ channels where we sought to find the direction of arrival of a channel with a single path of known gain, $\alpha = 1$. In the general case, we should consider arbitrary path gains $\alpha \in \mathbb{C}$ and channels with multiple paths.

\section{Beam Discovery}
\label{BeamDiscovery}
In this section, we present our proposed solution. As an initial step, we solve a simplified version of the problem where communication channels have a single transmit antenna and multiple receive antennas. Afterwards, we will build on it to provide the solution for general channels with multiple transmit and receive antennas.

\subsection{Beam Detection using LBC-inspired approach}
\label{LBC_inspired}
To identify the exact number of measurements and their corresponding design, we follow a decoding-like approach of
LBC\footnote{\label{LBC}In channel coding, the convention is to use row vectors. Thus, let $\boldsymbol{x}$ and $\boldsymbol{c}$ be $1{\times}k$ and $1{\times}n$ binary row vectors that represent an information sequence and its corresponding codeword of an LBC, respectively. Also let $\boldsymbol{r} {=} \boldsymbol{c} {+} \boldsymbol{e}$ be a received sequence corrupted by $1{\times}n$ error pattern $\boldsymbol{e}$. To decode $\boldsymbol{r}$, we calculate an error syndrome  vector $\boldsymbol{s}$, of size $1{\times}n{-}k$, such that $\boldsymbol{s} {=} \boldsymbol{r} \boldsymbol{H}^T$, where $\boldsymbol{H}$ is the parity check matrix of the used LBC. Then, a most likely error pattern $\hat{\boldsymbol{e}}$ can be uniquely identified by $\boldsymbol{s}$ using a look-up table called the \textit{standard array}. Finally, the decoded codeword is obtained using $\hat{\boldsymbol{c}}{=}\boldsymbol{r}{-}\hat{\boldsymbol{e}}$. A decoding error occurs if the number of errors, identified using $1$'s in $\boldsymbol{e}$, is beyond the error correction capability of the used code, denoted by $e_n$.
Note that in this context, all vectors, matrices and math operations are over GF(2).}.
First, we need to find an LBC, $C$, that has an error correction capability $e_n$ such that i) the maximum number of clusters in the channel, $L$, is equal to $e_n$ and ii) the length of its codewords $n$ is equal to the number of antennas $n_r$ ($n_r $ is also the number of resolvable directions).
The code $C$ has a parity check matrix $\boldsymbol{H}$ which represents the link between channel decoding and beam detection problems.
Binary codes deal with data and error sequences defined over the finite field $GF(2)$, i.e., addition and multiplication operations are defined over $GF(2)$ with binary inputs and outputs, i.e., $1$'s and $0$'s.
However, mm-wave channel parameters are defined over the complex numbers field $\mathbb{C}$. Therefore, to account for arbitrary path gains, we should be able to extend this concept to $\mathbb{C}$.

\textbf{Although $\boldsymbol{H}$ is defined over $GF(2)$, we interpret its $'1'$ and $'0'$ entries as real numbers.}
Then, similar to channel decoding, we seek to obtain a channel syndrome, $\boldsymbol{y_s}$, such that
${(\boldsymbol{y}_s)}^T {=} {(\boldsymbol{q}^a)}^T \boldsymbol{H}^T \Longrightarrow \boldsymbol{y}_s {=} \boldsymbol{H} \boldsymbol{q}^a$.
This matrix multiplication can be realized using channel measurements such that each measurement gives one component in $\boldsymbol{y_s}$.
Measurements $\{ y_{s_i}\}_{\forall i}$ make up the components of the channel syndrome vector $\boldsymbol{y_{s}}$. Then, we need to find a mapping function $\xi()$ that takes in the channel syndrome vector $\{\boldsymbol{y_{s}}\}$ as an input and returns the estimated channel $\widehat{\boldsymbol{q}}^{\boldsymbol{a}}$.
The position of each non-zero component in $\widehat{\boldsymbol{q}}^{\boldsymbol{a}}$ identifies a path's AoA, and its value identifies its baseband path gain.
Finally, for this to work, we need to show that such channel measurements provide one-to-one mapping to the channel. In other words, $\boldsymbol{y_s}$ must be a \textit{sufficient statistic} for estimating the channel.
In Section \ref{SufficientStatistic}, we will show that our design results in the sufficient statistic we seek to achieve.

\begin{remark}[Difference between $\boldsymbol{y_s}$ and $\boldsymbol{u_s}$]
Both $\boldsymbol{y_s}$ and $\boldsymbol{u_s}$ refer to vectors of measurement symbols, however, $\boldsymbol{u_s}$ is considered to be the noise corrupted and quantized version of $\boldsymbol{y_s}$.
Specifically, $\boldsymbol{u_s} {=} [\boldsymbol{y_s} + \boldsymbol{z}]_+ $ such that $\boldsymbol{z}$ is the measurement noise vector.
While $\boldsymbol{u_s}$ is what we expect to observe, our design of measurements focuses on finding $\boldsymbol{y_s}$; an error-free symbol.
Of course, errors degrade beam discovery performance. Thus, in Section \ref{ErrorCorr}, we will deal with the effect of measurement errors separately and present a solution that increases reliability of beam discovery.
The separate treatment of measurement errors simplifies the design and provides a clear understanding of the nature of our solution.

\end{remark}

\begin{remark}[Number of Measurements]
The solution we obtain is dependent on channel parameters, namely, the number of antennas and the sparsity level of the channel.
That is, at a fixed sparsity level, i.e., fixed number of clusters $L$, a larger number of antennas necessitates more channel measurements. In other other words, the high resolution realized by large $n_r$ comes at a price of an increased number of measurements.
Similarly, at fixed $n_r$, more channel clusters involve more measurements for correct channel estimation.
\end{remark}

\subsection{Measurements Design}
\label{MeasurementsDesign}
Recall that each component in $\boldsymbol{q}^a$ represents a resolvable angular direction at the receiver. Let each resolvable direction be given an identification number ($\textit{dir}_{\textit{rx}}$\textit{\#i}).
Also let $\textit{beam}_{\textit{rx}}$\textit{\#i} denote the beam pattern pointing to $\textit{dir}_{\textit{rx}}$\textit{\#i}, i.e., a signal coming from $\textit{dir}_{\textit{rx}}$\textit{\#i} can be individually measured using $\textit{beam}_{\textit{rx}}$\textit{\#i} (similar to beam patterns in Fig. \ref{fig:AllBeams}).

Now, we seek to obtain $\boldsymbol{y_s} {=} \boldsymbol{H} \boldsymbol{q^a}$ using careful design of $\boldsymbol{w_i}$'s
, i.e.,
\begin{equation}\label{syndromeEqn}
\boldsymbol{y_s} = 
 \begin{pmatrix}
  y_{s_0} \\
  y_{s_1} \\
  \vdots\\
  y_{s_{m{-}1}}
 \end{pmatrix}
 =
  \begin{pmatrix}
    \boldsymbol{w}^H_{\boldsymbol{0}} \boldsymbol{q}\\
    \boldsymbol{w}^H_{\boldsymbol{1}} \boldsymbol{q}\\
    \vdots\\
    \boldsymbol{w}^H_{\boldsymbol{m{-}1}} \boldsymbol{q}\\
  \end{pmatrix}
  \equiv
  \boldsymbol{H} \boldsymbol{q^a}.
\end{equation}
To achieve this, each rx-combining vector $\boldsymbol{w_i}$ is designed as a multi-armed beam, i.e., composed of several sub-beams similar to the beam pattern in Fig. \ref{fig:RxComb}. The sub-beams included in each $\boldsymbol{w_i}$ are identified by the $i^{th}$ row of the matrix $\boldsymbol{H}$.
That is, only if $h_{i,j}$, the intersection of the $i^{th}$ row and $j^{th}$ column, is $=1$, do we include $\textit{beam}_{\textit{rx}}$\textit{\#j} as a sub-beam in $\boldsymbol{w_i}$ (also refer to our discussion in Section \ref{illustrative}).


The design of rx-combining vectors is a crucial aspect of this work.
As an initial step towards obtaining proper rx-combining vectors, we consider designing $\boldsymbol{w_i}$'s using linear summation of all analog beamformers that correspond to \textit{beam}$_{\text{\textit{rx}}}$\textit{\#j}'s $\forall j : h_{i,j} = 1$.
Let $\Omega_j {=} \cos(\phi_j) {=} \frac{j}{L_r} \: \forall j {\in} \{0,\dots,n_r{-}1\}$, such that $\boldsymbol{e_r}(\Omega_j)$ is the spatial signature of $\textit{beam}_{\textit{rx}}$\textit{\#j}.
Then, $\boldsymbol{w_i}$ can be designed as
\begin{equation}
\boldsymbol{w_i} = \sum_{j=0}^{n_r-1} \mathbbm{1}_{ \{ h_{i,j}  = 1 \} } \boldsymbol{e_r}(\Omega_j)
\end{equation}

\subsection{Sufficient Statistic}
\label{SufficientStatistic}

We will show in this section that each channel syndrome can only be mapped to a single \textit{measurable channel}. A measurable channel in this context refers to $n_r{\times}1$ channels with $L$ non-zero components such that $L \leq e_n$, where $e_n$ is the error correction capability of the underlying code $C$ and $n_r {=} n$ is its CWs length.
Let $\mathcal{Q}^a$ be the set of all measurable channels:
\begin{equation}\label{eqn:setOfMeasurableChannels}
\mathcal{Q}^a \triangleq  \left\lbrace \boldsymbol{q^a} \in \mathbb{C}^{n_r} : \left\vert q^a_i {:} q^a_i {\neq} 0 \right\vert \leq e_n  \right\rbrace.
\end{equation}
%
Since each measurement combines signals coming from multiple directions, each element in the channel syndrome vector is a linear combination of a subset of the available paths. In other words, each measurement has the possibility that one or more paths are included in it.
This setting is rather challenging. To understand why, consider a channel that has two paths with gains $\alpha_1,\alpha_2 \in \mathbb{C}$. Suppose that
$\alpha_1$ and $\alpha_2$ are of equal magnitudes but are out-of-phase (i.e., phase shift $= 180^\circ$). Hence, if signals coming from both paths are combined in a single measurement, the resultant value is $0$ which is similar to the result we get if no paths exist in the measured directions.
Also each channel measurement can be a result of endless possibilities for the combined path gain values.
So, a natural question to ask is: does this ambiguity cause measurement errors?
The direct answer to this question is: \textbf{No}.
In the sequel we will show that the resulting channel syndrome, i.e., the combination of all channel measurements, is sufficient to correctly estimate the channel.

First, recall our discussion in Footnote \ref{LBC}.
Then, consider all \textit{single-bit error patterns} $\boldsymbol{e^{(i)}}$ 
of a code $C$, with maximum number of correctable errors ${=}e_n$, such that
\begin{equation}
\label{eqn:singleBitErrors}
e_k^{(i)} = \begin{cases}
               1, & k = i\\
               0, & k \neq i
            \end{cases}
\end{equation}
where $e_k^{(i)}$ is the $k^{th}$ component of $\boldsymbol{e^{(i)}}$.
Also let $\boldsymbol{s^{(i)}}$ be the corresponding error syndrome of $\boldsymbol{e^{(i)}}$.
Recall that $\boldsymbol{s^{(i)}} {=} \boldsymbol{e^{(i)}}\boldsymbol{H}^T$.
Hence, we can see that $\boldsymbol{s^{(i)}}$ is exactly the $i^{th}$ row of $\boldsymbol{H}^T$, i.e., $i^{th}$ column of $\boldsymbol{H}$.
Let $\mathcal{E}_C$ denote the set of correctable error patterns of the code $C:$
\begin{multline}\label{eqn:setOfCorrectableErrors}
\mathcal{E}_C \triangleq  \Big\{ \boldsymbol{e} {\in} \{0,1\}^n : \boldsymbol{e} {=} \sum_{i=1}^{n} \omega_i \boldsymbol{e}^{(i)} , \\
\omega_i \in \{0,1\} : \left\vert \omega_i {:} \omega_i {=} 1 \right\vert \leq e_n  \Big\}.
\end{multline}
Now, we can write any correctable error pattern $\boldsymbol{e} \in \mathcal{E}_C$ as a linear combination of all single-bit error patterns over the finite field $GF(2)$ such that
\begin{equation}
\label{eqn:errorPattern}
\boldsymbol{e} = \omega_1 \boldsymbol{e}^{(1)} + \omega_2 \boldsymbol{e}^{(2)} + \dots + \omega_n \boldsymbol{e}^{(n)}
\end{equation}
and its corresponding error syndrome is
\begin{equation}
\boldsymbol{s} = \omega_1 \boldsymbol{s}^{(1)} + \omega_2 \boldsymbol{s}^{(2)} + \dots + \omega_n \boldsymbol{s}^{(n)}
\end{equation}
\begin{lemma}
For an error pattern $\boldsymbol{e}_t$ with number of bit errors identical to $e_n$, its syndrome $\boldsymbol{s}_t$ is a linear combination of $e_n$ linearly independent vectors $\boldsymbol{s}^{(i)}$.
\end{lemma}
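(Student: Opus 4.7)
The plan is to reduce the lemma to a standard structural property of the parity-check matrix $\boldsymbol{H}$. First, I would use the linearity of the syndrome map over $GF(2)$ together with the explicit form of the single-bit error patterns in \eqref{eqn:singleBitErrors}: any error pattern $\boldsymbol{e}_t$ with exactly $e_n$ errors at positions $i_1,\ldots,i_{e_n}$ decomposes as $\boldsymbol{e}_t = \sum_{k=1}^{e_n} \boldsymbol{e}^{(i_k)}$, so by applying $\boldsymbol{H}^T$ on the right its syndrome satisfies $\boldsymbol{s}_t = \sum_{k=1}^{e_n} \boldsymbol{s}^{(i_k)}$. Since, as already observed in the excerpt, $\boldsymbol{s}^{(i)}$ is precisely the $i$-th column of $\boldsymbol{H}$ (written as a row), the task reduces to showing that any $e_n$ distinct columns of $\boldsymbol{H}$ are linearly independent over $GF(2)$.

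For that reduced step, my plan is to invoke the classical characterization of the minimum distance $d_{\min}$ of a linear block code as the smallest number of linearly dependent columns of its parity-check matrix, together with the well-known bound $d_{\min} \geq 2 e_n + 1$ implied by the code $C$ having error-correction capability $e_n$. Combined, these two facts guarantee that every collection of at most $2 e_n$ columns of $\boldsymbol{H}$ is linearly independent, and in particular so is every collection of $e_n$ columns. Applied to $\boldsymbol{s}^{(i_1)},\ldots,\boldsymbol{s}^{(i_{e_n})}$, this gives both the independence claim and, together with the decomposition above, the claimed representation of $\boldsymbol{s}_t$.

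The main obstacle, I expect, is conceptual rather than computational: one must use the full strength of $d_{\min} \geq 2 e_n + 1$ rather than the weaker $d_{\min} \geq e_n + 1$, since that factor of two is exactly what makes $e_n$ — and not $e_n - 1$ — the right number of provably independent columns. For completeness I would justify the column-independence statement in one line by contradiction: any non-trivial $GF(2)$-dependence among at most $2 e_n$ columns of $\boldsymbol{H}$ would correspond to a non-zero codeword of Hamming weight at most $2 e_n$, contradicting $d_{\min} \geq 2 e_n + 1$. No further machinery is needed, and the argument only uses ingredients already set up in Footnote~\ref{LBC} and in equations \eqref{eqn:singleBitErrors}--\eqref{eqn:errorPattern}.
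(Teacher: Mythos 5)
Your proposal is correct, but it takes a genuinely different route from the paper. The paper argues by contradiction at the level of syndromes: if $\boldsymbol{s}_t$ were a combination of $e_n$ linearly \emph{dependent} single-bit syndromes, then the same syndrome would also be produced by a correctable error pattern with strictly fewer than $e_n$ errors, contradicting the fact that distinct correctable error patterns have distinct syndromes (the standard-array property already set up in Footnote~\ref{LBC}). You instead reduce the claim to the classical characterization of $d_{\min}$ as the smallest number of linearly dependent columns of $\boldsymbol{H}$, combined with $d_{\min}\geq 2e_n+1$. Your route is more self-contained and actually proves the stronger statement that any $2e_n$ columns of $\boldsymbol{H}$ are independent over $GF(2)$ --- which is what is implicitly needed later, in Section~\ref{SufficientStatistic}, to separate two distinct measurable channels whose difference can have up to $2e_n$ nonzero entries; the paper's argument buys brevity by leaning on decoding facts it has already introduced. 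One small correction to your commentary: for the lemma \emph{as stated}, the weaker bound $d_{\min}\geq e_n+1$ already suffices, since it forces every set of at most $e_n$ columns to be independent; the factor of two is not what makes $e_n$ (rather than $e_n-1$) the right count here, though it is exactly what the stronger $2e_n$-column version requires. This misattribution does not affect the validity of your proof, which uses the correct, stronger hypothesis throughout.
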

\begin{proof}
We are going to prove this lemma by contradiction.
First, assume that $\boldsymbol{s}_t$ is a linear combination of $e_n$ linearly \textbf{dependent} vectors $\boldsymbol{s}^{(i)}$ over $GF(2)$.
Therefore, there exists another error syndrome $\boldsymbol{s}_t^*$ composed of only linear combination of independent vectors $\boldsymbol{s}^{(i)}$ such that $\boldsymbol{s}_t = \boldsymbol{s}_t^*$.
Therefore, there exists another error patter $\boldsymbol{e}_t^*$ with number of errors strictly less than $e_n$ such that its syndrome $\boldsymbol{s}_t^* = \boldsymbol{s}_t$. Since $\boldsymbol{e}_t^*$ has a number of errors less than $e_n$, then it is a correctable error pattern, and since all error syndromes of correctable error patterns are different, then $\boldsymbol{s}_t^*$ should be $\neq \boldsymbol{s}_t$. Hence, we arrive at a contradiction.
\end{proof}
It is also easy to see that if $\boldsymbol{e}_{t_1}$ and $\boldsymbol{e}_{t_2}$ are two different correctable error patterns, then their error syndromes $\boldsymbol{s}_{t_1}$ and $\boldsymbol{s}_{t_2}$ are composed of a linear combination of different sets of single-bit error syndromes $\boldsymbol{s}^{(i)}$.

\begin{lemma}\label{lemma:LinInd}
Any $n-$dimensional linearly independent vectors over $GF(2)$, are also linearly independent over $\mathbb{C}^n$.
\end{lemma}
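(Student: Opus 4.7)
The plan is to argue by contradiction, using the fact that the vectors in question have entries in $\{0,1\} \subset \mathbb{Z}$, so any linear dependence over $\mathbb{C}$ can be pulled down to an integer relation and then reduced modulo $2$.

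First I would set up the contrapositive cleanly. Let $\boldsymbol{v}_1, \ldots, \boldsymbol{v}_k \in \{0,1\}^n$ be linearly independent over $GF(2)$, and suppose for contradiction that they are linearly dependent over $\mathbb{C}$. Then there exist $c_1, \ldots, c_k \in \mathbb{C}$, not all zero, with $\sum_{i=1}^{k} c_i \boldsymbol{v}_i = \boldsymbol{0}$. The matrix whose columns are the $\boldsymbol{v}_i$ has entries in $\mathbb{Q}$ (in fact in $\{0,1\}$); since its rank is the same whether computed over $\mathbb{Q}$ or $\mathbb{C}$, linear dependence over $\mathbb{C}$ forces linear dependence over $\mathbb{Q}$. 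Hence we may in fact take $c_i \in \mathbb{Q}$, not all zero.

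Next, I would clear denominators: multiplying through by a common denominator gives integer coefficients $c_i \in \mathbb{Z}$, not all zero, still satisfying $\sum_{i=1}^{k} c_i \boldsymbol{v}_i = \boldsymbol{0}$ in $\mathbb{Z}^n$. After dividing out by $\gcd(c_1, \ldots, c_k)$, I may further assume that $\gcd(c_1, \ldots, c_k) = 1$, so in particular $2$ does not divide every $c_i$ and at least one $c_i$ is odd. Reducing the equation $\sum c_i \boldsymbol{v}_i = \boldsymbol{0}$ componentwise modulo $2$ then yields
\begin{equation*}
\sum_{i=1}^{k} \bar{c}_i \boldsymbol{v}_i = \boldsymbol{0} \quad \text{in } GF(2)^n,
\end{equation*}
where $\bar{c}_i \in GF(2)$ is $c_i \bmod 2$, and the $\bar{c}_i$ are not all zero. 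This contradicts the assumed linear independence of $\boldsymbol{v}_1, \ldots, \boldsymbol{v}_k$ over $GF(2)$, completing the argument.

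I expect the only step that requires care is justifying the passage from a $\mathbb{C}$-relation to a $\mathbb{Q}$-relation: this rests on the standard linear-algebra fact that the rank of a matrix with entries in a subfield $F \subseteq K$ is unchanged under field extension to $K$ (equivalently, the solution space of a homogeneous linear system over $F$ has the same dimension computed over $F$ or over $K$). Everything else is bookkeeping — clearing denominators and the elementary observation that if the $\gcd$ of a finite set of integers is $1$ then at least one is odd, which is what makes the reduction modulo $2$ produce a nontrivial $GF(2)$-relation.
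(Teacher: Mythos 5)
Your proof is correct, and it takes a genuinely different route from the paper's. The paper argues directly: it forms the $n \times m$ matrix $\boldsymbol{\Psi}$ whose columns are the given vectors, uses $GF(2)$-independence to extract an $m \times m$ minor that is nonzero modulo $2$, and then observes that the corresponding determinant of the $\{0,1\}$-submatrix, computed over the integers, must be odd and hence nonzero over $\mathbb{R}$ (and so over $\mathbb{C}$), giving independence immediately. You instead work by contradiction at the level of the coefficients of a putative dependence relation: rank invariance under field extension pulls a $\mathbb{C}$-relation down to $\mathbb{Q}$, clearing denominators and dividing by the gcd produces a primitive integer relation with at least one odd coefficient, and reduction modulo $2$ yields a nontrivial $GF(2)$-relation, contradicting the hypothesis. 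Both arguments ultimately exploit the same integrality fact --- a quantity that is nonzero modulo $2$ is a nonzero integer --- but apply it to different objects (a determinant in the paper, a coefficient vector in your version). Your route avoids determinants at the visible level at the cost of invoking rank invariance over field extensions (itself usually proved via determinants or Gaussian elimination); the paper's route is more direct but leaves the ``nonzero minor mod $2$ lifts to a nonzero integer determinant'' step largely implicit. Both are complete and correct.
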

\begin{proof}
Let $\boldsymbol{v_1}, \dots, \boldsymbol{v_m}$ be a set of $n-$dimensional vectors defined over $GF(2)$. The vectors $\boldsymbol{v_i}$ can be made the columns of an $n {\times} m$ matrix $\boldsymbol{\Psi}$. Since all $\boldsymbol{v_i}$'s are linearly independent over $GF(2)$, then $\boldsymbol{\Psi}$ is a left-invertible matrix. Therefore, there exists a non-zero (modulo $2$) $m {\times} m$ minor of $\boldsymbol{\Psi}$. Now, suppose the entries in $\boldsymbol{\Psi}$ are interpreted as real numbers. Therefore, $\boldsymbol{\Psi}$, now taken over $\mathbb{R}$, has an $m {\times} m$ sub-matrix whose determinant is non-zero, which proves that it is invertible. Therefore, the vectors $\boldsymbol{v_i}$'s, i.e., columns of $\boldsymbol{\Psi}$, are linearly independent over $\mathbb{R}$ which, using the same argument, can also be shown to be linearly independent over $\mathbb{C}$.
\end{proof}

Suppose that entries of $\boldsymbol{H}$ and $\boldsymbol{e^{(i)}}$ are interpreted as real numbers, then we can write the channel $\boldsymbol{q^a}$ as
\begin{equation}
({\boldsymbol{q}^a})^T = \alpha_1 \boldsymbol{e}^{(1)} + \alpha_2 \boldsymbol{e}^{(2)} + \dots + \alpha_n \boldsymbol{e}^{(n)}
\end{equation}
where $\alpha_i {\in} \mathbb{C}$
and $\sum_{i=1}^n \mathbb{\mathbbm{1}}_{\{\alpha_i {\neq} 0\}} {\leq} e_n$.
Therefore, each channel syndrome ${(\boldsymbol{y}_s)}^T {=} {(\boldsymbol{q}^a)}^T \boldsymbol{H}^T \Longrightarrow \boldsymbol{y}_s {=} \boldsymbol{H} \boldsymbol{q}^a$ is a linear combination of independent vectors in $\mathbb{C}^{n{-}k}$ (columns of $\boldsymbol{H}$). Therefore, all possible measurable channels yield unique channel syndromes which implies that they are sufficient for the channel estimation problem.

\subsection{Mapping Function $\xi()$}
\label{Mapping}
Now that we have shown that each measurable channel can be mapped to a unique channel syndrome, we need to find this mapping function, i.e., $\xi {:} \boldsymbol{y_s} \rightarrow \widehat{\boldsymbol{q}}^{\boldsymbol{a}}$, where $\widehat{\boldsymbol{q}}^{\boldsymbol{a}}$ denotes the estimated channel.
Next, we propose two different approaches to find $\xi()$.

\subsubsection{Look-up Table Method}
\label{subsub:LookUpMethod}
Again, we resolve to a technique used in hard decision decoding where a look-up table is constructed that maps every error syndrome to a corresponding error pattern. Likewise, we construct a look-up table that indicates which channel corresponds to an obtained channel syndrome.

Since we employ ADCs with finite resolution,
only a finite number of realizable syndromes, $\boldsymbol{y_s}$, exist (and a finite number of corresponding channels).
Therefore, a look-up table method is feasible. We construct the table by, first, generating all possible sparse angular channels.
Then, we find the corresponding channel syndromes using $\boldsymbol{y_s} = \boldsymbol{H} \boldsymbol{q}^a$, where $\boldsymbol{q}^a \in \mathcal{Q}^a$. 
Recall that the actual, noise-corrupted, received channel syndrome is $\boldsymbol{u_s}$. Therefore, $\boldsymbol{u_s}$ might not exactly match one of the channel syndrome vectors in the look-up table. Hence, we instead search for the $\boldsymbol{y_s}$ table entry that has the closest \textit{distance} $\delta$ to $\boldsymbol{u_s}$, and pick its corresponding channel as the estimated channel $\widehat{\boldsymbol{q}}^{\boldsymbol{a}}$.
We define the distance between the two complex $m-$dimensional vectors $\boldsymbol{y_s},\boldsymbol{u_s}$, to be the $l^2-$norm as follows:
\begin{equation}
\label{eqn:distance}
\delta(\boldsymbol{y_s},\boldsymbol{u_s}) = \norm{\boldsymbol{y_s}-\boldsymbol{u_s}}_2
= \sqrt{\sum_{i=0}^{m-1} \lvert y_{s_i} {-} u_{s_i} \rvert ^2 }.
\end{equation}

By obtaining $\widehat{\boldsymbol{q}}^{\boldsymbol{a}}$, we not only identify the AoA at the Rx, but we also obtain the magnitude and phase information associated with every strong path to the TX.

\begin{remark}[Size of look-up table]
The size of the look-up table scales proportionally with ADC resolution.
As the resolution of ADCs increases, the size of $\mathcal{Q}^a$ increases as well; since every non-zero component of every $\boldsymbol{q^a} {\in} \mathcal{Q}^a$ can take more values.
If low resolution ADCs could be tolerated, then the look-up table is a plausible choice for mapping owing to the small size of the look-up table.
However, if high resolution ADCs are needed, the look-up table size creates a problem for memory-limited devices especially for large antenna arrays. It also increases the complexity of finding the closest $\boldsymbol{y_s}$ table entry to the measurement $\boldsymbol{u_s}$.
Next, we will propose a different approach for mapping called \textit{search method} which does not scale with ADC resolution.
\end{remark}



\subsubsection{Search Method}
\label{subsub:SearchMethod}
Recall that $\boldsymbol{y_s} {=} \boldsymbol{H}\boldsymbol{q^a}$ (Eq. (\ref{syndromeEqn})), and let the parity check matrix $\boldsymbol{H}$ be represented as:
\begin{equation}
\boldsymbol{H} =
\begin{pmatrix}
\boldsymbol{h_1} & \boldsymbol{h_2} & \dots & \boldsymbol{h_{n_r}}
\end{pmatrix},
\end{equation}
where $\boldsymbol{h_i}$ is the $i^{th}$ column of $\boldsymbol{H}$.
Thus, we can write $\boldsymbol{y_s}$ as:
\begin{equation}
\boldsymbol{y_s} = q^a_1\boldsymbol{h_1} + q^a_2\boldsymbol{h_2} + \dots + q^a_{n_r}\boldsymbol{h_{n_r}},
\end{equation}
where $q^a_i$ is the $i^{th}$ component of $\boldsymbol{q^a}$.
Note that $\boldsymbol{q^a}$ is $L-$sparse, i.e., we have no more than $L$ non-zero components $q^a_i$.
Let the indices of the non-zero components be $x_1,x_2, \dots, x_L$, hence, $\boldsymbol{y_s}$ can succinctly be written as:
\begin{equation} \label{syndrome_short}
\boldsymbol{y_s} = q^a_{x_1}\boldsymbol{h_{x_1}} + q^a_{x_2}\boldsymbol{h_{x_2}} + \dots + q^a_{x_L}\boldsymbol{h_{x_L}}.
\end{equation}
Let
\begin{equation}
\label{eqn:EffC}
\boldsymbol{C} \triangleq
\begin{pmatrix}
\boldsymbol{h_{x_1}} & \boldsymbol{h_{x_2}} & \dots & \boldsymbol{h_{x_L}}
\end{pmatrix}.
\end{equation}
Then we can write Eq. (\ref{syndrome_short}) in matrix form as:
\begin{equation}
\label{eqn:compactMeasMatrixEq}
\boldsymbol{y_s} = \boldsymbol{C}\boldsymbol{q^a_c},
\end{equation}
where $\boldsymbol{q^a_c} =
\begin{pmatrix}
q^a_{x_1} & q^a_{x_2} & \dots q^a_{x_L}
\end{pmatrix}^T$ is a shortened version of $\boldsymbol{q^a}$ that only has $L$ dimensions. Also $\boldsymbol{C}$ is an $m {\times} L$ matrix of rank $L$, since $L{<}m$, and $\boldsymbol{h_{x_i}}$'s are linearly independent columns of $\boldsymbol{C}$ (recall our discussion in Section \ref{SufficientStatistic}).
Therefore, $\boldsymbol{C}$ has a left Moore-Penrose inverse (pseudo inverse), $\boldsymbol{C}^+ = (\boldsymbol{C}^T\boldsymbol{C})^{-1}\boldsymbol{C}^T$ where $\boldsymbol{C}^+\boldsymbol{C} = \boldsymbol{I}$ of size ${L{\times}L}$.
Thus, if we have knowledge of $\boldsymbol{C}$, we can then find $\boldsymbol{q^a_c}$ as:
\begin{equation}
\label{eqn:Eff_q_a_c}
\boldsymbol{q^a_c} = \boldsymbol{C}^+\boldsymbol{y_s}.
\end{equation}

The problem we need to solve is obtaining the matrix $\boldsymbol{C}$. We can solve this problem using an exhaustive search method which can be explained as follows:
\begin{enumerate}[(i)]
\item Candidate matrices $\boldsymbol{C_j}$ are generated by choosing different $L$ combinations of columns of $\boldsymbol{H}$ where $1 {\leq} j {\leq} {{n_r}\choose{L}}$.

\item Find $\boldsymbol{q^a_{c_j}} {=} \boldsymbol{C}^+_{\boldsymbol{j}}\boldsymbol{y_s} {=} \boldsymbol{C_j^{\text{$+$}}} \boldsymbol{C} \boldsymbol{q^a_c}$.
Note that at this step, we obtain a vector $\boldsymbol{q^a_{c_j}}$ identical to $\boldsymbol{q^a_c}$ if and only if $\boldsymbol{C}^+_{\boldsymbol{j}} \boldsymbol{C} {=} \boldsymbol{I} \Leftrightarrow \boldsymbol{C_j} {=} \boldsymbol{C}$.

\item Let $\boldsymbol{\beta}_j$ be such that 
\begin{equation}
\boldsymbol{\beta}_j = \boldsymbol{C_j} \boldsymbol{q^a_{c_j}} = \boldsymbol{C_j} \boldsymbol{C}^+_{\boldsymbol{j}} \boldsymbol{y_s},
\end{equation}
Hence, if the correct choice  $\boldsymbol{C_j}  = \boldsymbol{C}$ is made, then
\begin{align*}
\boldsymbol{\beta}_j &= \boldsymbol{C_j} \boldsymbol{C}^+_{\boldsymbol{j}} \boldsymbol{C} \boldsymbol{q^a_c}\\
&= \boldsymbol{C_j} \boldsymbol{q^a_c} = \boldsymbol{C} \boldsymbol{q^a_c}\\
&= \boldsymbol{y_s},
\end{align*}
else, if $\boldsymbol{C_j} \neq \boldsymbol{C}$, then\footnote{Since $\boldsymbol{C}^+_{\boldsymbol{j}}$ is the left pseudo-inverse of $\boldsymbol{C_j}$, and since $\boldsymbol{C_j}$ is not a square matrix, then
$\boldsymbol{C_j} \boldsymbol{C}^+_{\boldsymbol{j}} \neq \boldsymbol{I} = \boldsymbol{C}^+_{\boldsymbol{j}} \boldsymbol{C_j} \Longrightarrow \boldsymbol{C_j} \boldsymbol{C}^+_{\boldsymbol{j}} \boldsymbol{C} {\neq} \boldsymbol{C}$}
\begin{align*}
\boldsymbol{\beta}_j
&= \boldsymbol{C_j} \boldsymbol{C}^+_{\boldsymbol{j}} \boldsymbol{C} \boldsymbol{q^a_c}\\
& \neq \boldsymbol{y_s}
\end{align*}
%
\end{enumerate}

Hence, if $\boldsymbol{\beta}_{j^*} =  \boldsymbol{y_s}$, we declare its corresponding matrix $\boldsymbol{C}_{j^*}$ the true matrix $\boldsymbol{C}$ defined in Eq. (\ref{eqn:EffC}) which satisfies Eq. (\ref{eqn:compactMeasMatrixEq}).
Also, we have that $\boldsymbol{q^a_c} = \boldsymbol{q^a_{c_{j^*}}}$.
Since, identifying $\boldsymbol{C}$ is equivalent to identifying the indexes $x_1, \dots , x_L$. Thus, we found the angular channel $\boldsymbol{q^a}$ which is all zeros except - potentially\footnote{This means that if the number of paths is less than $L$, then some $q^a_{x_i}$'s might have zero values as well.} - for the components $q^a_{x_1}, \dots, q^a_{x_L}$.

The previous discussion dealt with an idealized version of the measurements (i.e., $\boldsymbol{y_s}$), however, in practice, we observe $\boldsymbol{u_s}$ as an error-corrupted version of $\boldsymbol{y_s}$.
Define $\boldsymbol{z_s}$ to be the error vector that captures the effect of both channel noise and quantization error which satisfies
\begin{equation}
\label{eqn:quantizedError}
\boldsymbol{u_s} = \boldsymbol{y_s} + \boldsymbol{z_s}.
\end{equation}

Suppose that we know the matrix $\boldsymbol{C}$ for which we have
\begin{equation}
\boldsymbol{u_s} = \boldsymbol{C} \boldsymbol{q^a_c} + \boldsymbol{z_s},
\end{equation}
then, we can find $\boldsymbol{C}^+ \boldsymbol{u_s}$ (compare to Eq. (\ref{eqn:Eff_q_a_c})) as follows
\begin{align*}
\boldsymbol{C}^+ \boldsymbol{u_s} &= \boldsymbol{C}^+ \boldsymbol{C} \boldsymbol{q^a_c} + \boldsymbol{C}^+ \boldsymbol{z_s} \\
&=  \boldsymbol{q^a_c} + \boldsymbol{C}^+ \boldsymbol{z_s},
\end{align*}
to be a noise-corrupted version of $\boldsymbol{q^a_c}$.

Now, to find an estimate $\hat{\boldsymbol{q}}^{\boldsymbol{a}}$ of $\boldsymbol{q^a}$, we follow a very similar procedure to the one described before as follows:
\begin{enumerate}[(i)]
\item Matrices $\boldsymbol{C_j}$ are generated similar to the $1^{st}$ step before.

\item Define $\boldsymbol{E_j}$ to be the difference between $\boldsymbol{C}$ and $\boldsymbol{C_j}$ where
\begin{equation}
\boldsymbol{C} = \boldsymbol{C_j} + \boldsymbol{E_j}.
\end{equation}
That is, $\boldsymbol{E_j} = \mathbf{0}$ (all zero matrix) $\Leftrightarrow \boldsymbol{C_j} = \boldsymbol{C}$.

\item Find $\boldsymbol{q^a_{c_j}}$ such that 
\begin{align}
\boldsymbol{q^a_{c_j}} &= \boldsymbol{C}^+_{\boldsymbol{j}}\boldsymbol{u_s} \\
&= \boldsymbol{C_j^{\text{$+$}}} \boldsymbol{C} \boldsymbol{q^a_c} + \boldsymbol{C_j^{\text{$+$}}} \boldsymbol{z_s} \nonumber \\
&= \boldsymbol{C_j^{\text{$+$}}} (\boldsymbol{C_j} + \boldsymbol{E_j}) \boldsymbol{q^a_c} + \boldsymbol{C_j^{\text{$+$}}} \boldsymbol{z_s}
\nonumber \\
&= \boldsymbol{q^a_c}  +  \boldsymbol{C_j^{\text{$+$}}}  (\boldsymbol{E_j} \boldsymbol{q^a_c} + \boldsymbol{z_s}).
\nonumber
\end{align}

Unlike the $2^{nd}$ step of the no-error case, $\boldsymbol{q^a_{c_j}}$ will not be identical to the true $\boldsymbol{q^a_c}$ with probability $1$, since $\boldsymbol{z_s}$ is not identical to $\mathbf{0}$ with probability $1$ ($\boldsymbol{z_s}$ is the difference between continuous and discrete quantities).

\item
Let $\boldsymbol{\beta}_j$ be such that
\begin{align}
\boldsymbol{\beta}_j &= \boldsymbol{C_j} \boldsymbol{q^a_{c_j}} = \boldsymbol{C_j}  \left( \boldsymbol{q^a_c}  +  \boldsymbol{C_j^{\text{$+$}}}  \left( \boldsymbol{E_j} \boldsymbol{q^a_c} + \boldsymbol{z_s}\right) \right) \\
&=
\begin{cases}
\boldsymbol{y_s} + \boldsymbol{C_j} \boldsymbol{C_j^{\text{$+$}}} \boldsymbol{z_s} &,  \boldsymbol{C} {=} \boldsymbol{C_j} \\
\boldsymbol{C_j} \boldsymbol{q^a_c} + \boldsymbol{C_j} \boldsymbol{C_j^{\text{$+$}}} \left( \boldsymbol{E_j} \boldsymbol{q^a_c} + \boldsymbol{z_s} \right)
&, \boldsymbol{C} {\neq} \boldsymbol{C_j}
\end{cases}
\end{align}
Then find $j^*$ such that 
\begin{equation}
j^* = \argmin_j \norm{\boldsymbol{\beta}_j - \boldsymbol{u_s}} ,
\end{equation}
such that $\boldsymbol{\beta}_j - \boldsymbol{u_s}$ is given by
\begin{equation}
\boldsymbol{\beta}_j {-} \boldsymbol{u_s} =
\boldsymbol{C_j} \boldsymbol{q^a_c} {-} \boldsymbol{y_s}
+
\boldsymbol{C_j} \boldsymbol{C_j^{\text{$+$}}} \boldsymbol{E_j} \boldsymbol{q^a_c} {+}
\left(\boldsymbol{C_j} \boldsymbol{C_j^{\text{$+$}}} {-} \boldsymbol{I}\right)\boldsymbol{z_s}
\end{equation}
which at $\boldsymbol{C} = \boldsymbol{C_{j^*}}$ is further reduced to
\begin{equation}
\boldsymbol{\beta}_{j^*} - \boldsymbol{u_s} =
\left(\boldsymbol{C_j} \boldsymbol{C_j^{\text{$+$}}} {-} \boldsymbol{I}\right)\boldsymbol{z_s}
\end{equation}
\end{enumerate}

\subsection{Multiple Transmit and Receive Antennas}
\label{MultiTXRX}
So far, we have considered channels with single transmit antennas and shown how to perform beam discovery at RX. To extend our approach to a general setting, we consider channels with $n_t$ antennas at TX, and $n_r$ antennas at RX.
Thus, instead of the TX just sending signals omnidirectionally, now it can perform highly directional transmission.
Recall that the RX is able to perform channel measurements using multi-armed beams. Similarly, the TX can send signals using multi-armed beams to simultaneously focus on multiple directions using precoding vectors $\boldsymbol{f_j}$.

The design of precoding vectors can also be obtained using an LBC approach. Similar to the method of designing rx-combining vectors $\boldsymbol{w_i}$, we look for an LBC, $C_2$, that has CWs of length $n_2 {=} n_t$ and can correct for $e_n {=} L$ errors. Let the parity check matrix of $C_2$ be $\boldsymbol{H}_2$, using which, we will design the precoding vectors $\boldsymbol{f_j}$.
Let \textit{beam}$_{\text{\textit{tx}}}$\textit{\#i} denote the $i^{th}$ TX beam which points to TX direction \textit{dir}$_{\text{\textit{tx}}}$ \textit{\#i}.
Then, just as before, we envisage $\boldsymbol{H}_2$ as an array whose columns are associated with resolvable TX directions such that: i) its $j^{th}$ column corresponds to \textit{dir}$_{\text{\textit{tx}}}$\textit{\#j}, and ii) its $i^{th}$ row corresponds to the $i^{th}$ measurement.
We note that no actual measurements are performed at TX; we use the word \textit{measurement} to refer to precoding, consistent with the case of RX.
That is, the $i^{th}$ TX measurement is actually the $i^{th}$ precoder $\boldsymbol{f}_i$.
Thereby, we design the $i^{th}$ precoder as a multi-armed TX beam such that, only if $h_{i,j}$, the intersection of the $i^{th}$ row and $j^{th}$ columns of $\boldsymbol{H}_2$, is $ = 1$, do we include sub-beam \textit{beam}$_{\text{\textit{tx}}}$\textit{\#j} in $\boldsymbol{f_i}$.
Each TX measurement provides a component in a TX channel syndrome vector $\boldsymbol{y_s^{TX}}$. The total number of TX measurements (i.e., precoding vectors), denoted by $m_2$, is equal to the number of parity check bits of the code $C_2$. That is, $m_2 {=} n_2{-}k_2$, where $k_2$ is the length of $C_2$'s information sequences.
To obtain AoDs of strong paths at TX, we define the function $\xi_2()$ as the mapping function between all possible TX channel syndromes and their corresponding angular channels denoted by ${\boldsymbol{q}^a}^{\boldsymbol{TX}}$.
Note that, for every \textit{dir$_{\text{\textit{rx}}}$\#i}, there exists a corresponding ${\boldsymbol{q}^a}^{\boldsymbol{TX(i)}}$ which represents the $i^{th}$ row of $\boldsymbol{Q^a}$.
Also, since the maximum number of clusters is $L$, then, the number of non-zero vectors ${\boldsymbol{q}^a}^{\boldsymbol{TX(i)}}$ is $\leq L$.

\IncMargin{1em}
\begin{algorithm}[t]
\SetKwData{Left}{left}\SetKwData{This}{this}\SetKwData{Up}{up}
\SetKwFunction{Union}{Union}\SetKwFunction{FindCompress}{FindCompress}
\SetKwInOut{Input}{input}\SetKwInOut{Output}{output}

 \Input{$ \{ \boldsymbol{w}_i\}_{ \forall i \in \{1,\dots,m_1\}} \:$,
 $\{ \boldsymbol{f}_j \} _{, \forall j \in \{1,\dots,m_2\}} \:$ ,
 $\xi_1()  :  \boldsymbol{y_{s}}  \rightarrow \widehat{\boldsymbol{q}}^a \:$,
 $\xi_2()  :  \boldsymbol{y_{s}^{TX}}  \rightarrow \hat{\boldsymbol{q}}^{a^{\boldsymbol{TX}}}$
 }
 \Output{ $\{\boldsymbol{y_{s_i}}\}_{\forall i \in \{1,\dots,m_2\}}$ }
\Begin{
 $j = 0$\;
 \While{$j < m_2 $}{
 	$i = 0$\;
	\While{$i < m_1 $}{
		$y_{s_{i,j}} = \boldsymbol{w}_i^H \boldsymbol{Q} \boldsymbol{f}_j s + \boldsymbol{w}_i^H \boldsymbol{n}$  \tcp*[r]{channel measurement}
		$i \leftarrow i + 1$
	}
	$\boldsymbol{y_{s_j}} \leftarrow \{y_{s_{i,j}}\}_{\forall i \in \{1,\dots,m_1\}}$ \tcp*[r]{construct channel syndrome $\boldsymbol{y_{s_j}}$}
		\tcc{find corresponding channel $\boldsymbol{q^a}^{(j)} = [{q_1^a}^{(j)}, {q_2^a}^{(j)}, \dots, {q_{n_r}^a}^{(j)}]^T$}
	$\boldsymbol{q^a}^{(j)} \leftarrow \xi_1(\boldsymbol{y_{s_j}})$\;

		\For{$p\leftarrow 1$ \KwTo $n_r$}{

			\tcc{construct TX channel syndromes $\boldsymbol{y_{s}^{TX(p)}}$, where $\boldsymbol{y_{s}^{TX(p)}} = [y_{s_1}^{TX(p)},y_{s_2}^{TX(p)}, \dots, y_{s_{m_2}}^{TX(p)}]^T$}		
			$y_{s_j}^{TX(p)} \leftarrow {q_p^a}^{(j)}$

			}

		$j \leftarrow j + 1$\;
 }
	\For{$p\leftarrow 1$ \KwTo $n_r$}{
 $\boldsymbol{{q^a}^{TX(p)}} \leftarrow \xi_2(\boldsymbol{y_s^{TX(p)}}) $
 }
 $\widehat{\boldsymbol{Q}}^{\boldsymbol{a}} =
 \begin{pmatrix}
  \boldsymbol{{q^a}^{TX(1)}} & \boldsymbol{{q^a}^{TX(2)}} & \hdots & \boldsymbol{{q^a}^{TX(n_r)}}
 \end{pmatrix}^T
 $
}
 \caption{\small Beam discovery of multiple TX/RX antennas.}
 \label{Alg:MultiTXRX}
\end{algorithm}
\DecMargin{1em}

To see the whole picture, assume that a code $C_1$, with CWs of length $n_1 {=} n_r$, is an LBC code associated with beam discovery at RX side. Let the number of RX measurements, i.e., the number of rx-combining vectors, be $m_1$ such that $m_1  {=} n_1 {-} k_1$, where $k_1$ is the length of information sequences of $C_1$. Also let $\xi_1()$ be the mapping function between RX channel syndromes and its corresponding angular channel.
Under this setting, the beam discovery problem is performed as follows: 
i) The TX starts starts sending its training sequence using the precoder $\boldsymbol{f_j}, \forall j \in \{0,\dots,m_2{-}1\}$.
ii) The RX performs $m_1$ channel measurements while $\boldsymbol{f_j}$ is being used at TX and obtains a channel syndrome $\boldsymbol{y_{s_j}}$.
iii) Based on $\boldsymbol{y_{s_j}}$, the RX obtains a corresponding channel, $\boldsymbol{{q^a}^{(j)}}$ with path components $\{{q_p^a}^{(j)}\}_{\forall p \in \{1,\dots,n_r\}}$.
Notice that $\boldsymbol{{q^a}^{(j)}}$'s do not necessarily represent individual path gains, but rather, combinations of paths accumulating at a single \textit{dir$_{\text{\textit{rx}}}$\textit{\#}}. Therefore, there exists a resemblance to channel syndromes which we exploit.
iv) We construct a set of $n_r$ TX channel syndromes, $\boldsymbol{y_s^{TX(p)}}$ where their $j^{th}$ component $y_{s_j}^{TX(p)} {=} {q_p^a}^{(j)}$, i.e.,
$[\boldsymbol{y_s^{TX(1)}}, \boldsymbol{y_s^{TX(2)}}, \dots, \boldsymbol{y_s^{TX(n_r)}}] =
[\boldsymbol{{q^a}^{(1)}}, \boldsymbol{{q^a}^{(2)}} , \dots, \boldsymbol{{q^a}^{(m_2)}}]^T$.
v) Finally, we find the angular TX channel for every \textit{dir}$_\text{\textit{rx}}$ \textit{\#p}, i.e., $p^{th}$ row of $\boldsymbol{Q^a}$, using the mapping function $\boldsymbol{{q^a}^{TX(p)}} {=} \xi_2(\boldsymbol{y_s^{TX(p)}})$.
Notice that, since no more than $L \ll n_r$ clusters exist, and since $\boldsymbol{0}$ channels correspond to $\boldsymbol{0}$ channel syndromes, we only need to apply $\xi_2()$ a maximum of $L$ times -unless measurement error occurs.
This whole process is highlighted in Algorithm \ref{Alg:MultiTXRX}.

\begin{remark}{}
The estimated channel $\widehat{\boldsymbol{Q}}^{\boldsymbol{a}}$ may contain more than $L$ non-zero components. The reason is that the receiver obtains a channel $\boldsymbol{{q^a}^{(j)}}$ for every precoder $\boldsymbol{f_j}$ which may contain erroneous component estimates. Incorrect estimates occur as a result of measurement errors which happen due to i) channel noise, ii) quantization error.
Now every $\boldsymbol{{q^a}^{(j)}}$ may contain a maximum of $L$ non-zero components, however, some of which may be due to measurement errors. Afterwards, potentially noise-corrupted $\{\boldsymbol{{q^a}^{(j)}}\}_{\forall j\in\{1,\dots,n_r\}}$ are used to obtain TX channel syndromes as shown in Algorithm \ref{Alg:MultiTXRX}. That is, for every \textit{dir}$_{\text{\textit{rx}}}$\textit{\#i} we obtain a TX channel syndrome to identify the corresponding \textit{dir}$_{\text{\textit{tx}}}$\textit{\#j}'s that have strong components. Thus, we may obtain a maximum of $L$ non-zero components per \textit{dir}$_{\text{\textit{rx}}}$\textit{\#i}.
\end{remark}

\section{Error Correction}
\label{ErrorCorr}
So far, the main focus of our work has been finding the most efficient way for beam discovery under the channel sparsity assumption. While doing so, we did not really have special treatment to deal with measurement errors.
In fact, with no measurement errors, our proposed solution can estimate the channel matrix perfectly.
However, the presence of channel noise and quantization degrades the beam discovery (channel estimation) performance.
To combat the effect of such imperfections, we focus our attention on answering the following question: \textbf{\textit{Can we trade efficiency for performance?}}
By \textit{efficiency} we refer to the reduced number of measurements. So, in other words, we need to study whether increasing the number of channel measurements --- by essentially adding redundancy --- would improve the beam detection performance.
The answer to this is: \textbf{Yes}.
In the sequel we will present a method that allows for increasing the number of measurements and trades it for higher reliability.

The very concept of adding redundant information to combat noisy observations is the foundation of channel coding. Hence, it is appealing to use channel coding ideas to achieve more reliable beam discovery.
For simplicity we again present our proposed solution for the simple setting of one transmit antenna and multiple receive antennas. The general multiple transmit and receive antennas setting can be dealt with in the same fashion described in Section \ref{MultiTXRX}.

Recall that a received symbol $u_s$ is given by Eq. (\ref{eqn:quantizedRxSymbol}) as $u_s {=} [y_s + \boldsymbol{w}^H \boldsymbol{n}]_+$ where $y_s {=} \boldsymbol{w}^H \boldsymbol{Q} \boldsymbol{f} s$ is the error-free measurement symbol. We write $u_{s} {=} y_s + z_s$ where $z_s$ is the measurement error (Eq. (\ref{eqn:quantizedError})).
Also recall that, for $\boldsymbol{f} {=} 1$ (one transmit antenna), and $\boldsymbol{w_i}$, we form the channel syndrome vector $\boldsymbol{u_s} {=} [u_{s_0} \: u_{s_1} \: \dots \: u_{s_{m{-}1}}]^T$ such that $u_{s_i} {=}  \boldsymbol{w_i}^H \boldsymbol{q} s {+} z_{s_i}$ where $\boldsymbol{q}$ is the $n_r {\times} 1$ channel vector. Equivalently, we have that $\boldsymbol{u_s} {=} \boldsymbol{H} \boldsymbol{q^a} {+} \boldsymbol{z_s}$ where $\boldsymbol{z_s}$ is formed by stacking $\{z_{s_i}\}_{\forall i {=} 0,\dots,m{-}1}$. Recall that this is exactly Eq. (\ref{syndromeEqn}) but with the noise terms added.

In fact, we can perceive the channel syndrome $\boldsymbol{y_s}$ as raw information sequence that need to be transmitted over a noisy channel, and $\boldsymbol{u_s}$ is the noise-corrupted received sequence. The syndrome, $\boldsymbol{y_s}$, is a vector that lies in an $m-$dimensional vector space. By exploiting channel codes, we can map $\boldsymbol{y_s}$ to longer sequences $\boldsymbol{y_s^{\nu}}$ (encoded channel syndrome) that lie in an $m-$dimensional subspace of an $m_c-$dimensional vector space.
The longer sequences $\boldsymbol{y_s^{\nu}}$ should have increased distance which allows for higher resilience against measurement errors
Hence, $\boldsymbol{u_s}$ can now be written as $\boldsymbol{u_s} = \boldsymbol{y_s^{\nu}} + \boldsymbol{z_s}$. Our goal is to design $\boldsymbol{y_s^{\nu}}$. Once we achieve that, the rest of the problem can be tackled as discussed in section \ref{BeamDiscovery}.

Towards that end, let us use an error correction code $C_c$, with generator matrix $\boldsymbol{G_c}$ and error correction capability $e_c$. Note that we use the subscript $c$ to refer to \textit{correction}. The size of $\boldsymbol{G_c}$ is $m{\times}m_c$.
Thus, the encoded channel syndromes can be represented as $\boldsymbol{y_s^{\nu}} = \boldsymbol{G}^T_{\boldsymbol{c}} \boldsymbol{y_s}$,
where $\boldsymbol{y_s}$ and $\boldsymbol{y_s^{\nu}}$ are of sizes $m{\times}1$ and $m_c{\times}1$, respectively.
Thus, $\boldsymbol{y_s^{\nu}}$ can be written as $\boldsymbol{y_s^{\nu}} = \boldsymbol{G}^T_{\boldsymbol{c}} \boldsymbol{H} \boldsymbol{q^a}$. Then, similar to Eqn. (\ref{syndromeEqn}) we want to use the matrix $\boldsymbol{G}^T_{\boldsymbol{c}} \boldsymbol{H}$ to design $\boldsymbol{y_s^{\nu}}$. However, the problem here is that this matrix in not necessarily a binary matrix (i.e., with elements of $'1'$s and $'0'$s). Hence, let us denote by $\boldsymbol{H^{\nu}}$, the matrix $\boldsymbol{G}^T_{\boldsymbol{c}} \boldsymbol{H} \pmod {2}$ and use it to design $\boldsymbol{y_s^{\nu}}$ such that
\begin{equation}
\label{eqn:encodedChSyn}
\boldsymbol{y_s^{\nu}} =  \boldsymbol{H^{\nu}}\boldsymbol{q^a}.
\end{equation}
In other words, $\boldsymbol{H^{\nu}} {=} \boldsymbol{G}^T_{\boldsymbol{c}} \boldsymbol{H}$
is the matrix product over $GF(2)$.
Therefore, instead of designing the channel measurements based on $\boldsymbol{H}$, we propose to design them based on $\boldsymbol{H^{\nu}}$ with that being the only difference to the design proposed earlier.

At this point, it remains to show that the new measurements design still provides a one-to-one mapping to every angular channel (i.e., if $\boldsymbol{q^a_1} \neq \boldsymbol{q^a_2}$, then their corresponding channel syndromes $\boldsymbol{y_{s1}^{\nu}} \neq \boldsymbol{y_{s2}^{\nu}}$). Furthermore, we will show that the new design provides a better resilience to measurement errors.
That is, we will show that if $\boldsymbol{q^a_1} \neq \boldsymbol{q^a_2}$, then $\delta(\boldsymbol{y_{s1}},\boldsymbol{y_{s2}}) \leq \delta(\boldsymbol{y_{s1}^{\nu}},\boldsymbol{y_{s2}^{\nu}})$, where $\delta(\cdot,\cdot)$ is defined as in Eq. (\ref{eqn:distance}).

\subsection{Sufficient Statistic}

We start off by showing that the new measurements provide a sufficient statistic for beam discovery.
We will follow a similar approach to that of Section \ref{SufficientStatistic}. Specifically, we will first consider error patterns, matrices, and operators over the finite field $GF(2)$.
Afterwards, we will extend those concepts to the complex field where all channel matrices and measurements lie.

Let us consider a code $C$, with codewords of length $n$ and error correction capability $e_n$.
The parity check and generator matrices of $C$ are given by $\boldsymbol{H}$ and $\boldsymbol{G}$, respectively.
The error syndromes of $C$ are given by $\boldsymbol{s} = \boldsymbol{r}\boldsymbol{H}^T = \boldsymbol{e}\boldsymbol{H}^T$, where $\boldsymbol{r}$ is the received sequence and $e$ is the error pattern corrupting the transmitted codeword $\boldsymbol{c}$ (recall Footnote \ref{LBC}).
Now suppose we encode $\boldsymbol{s}$ using another error correction code $C_c$. The parity check and generator matrices of $C_c$ are given by $\boldsymbol{H_c}$ and $\boldsymbol{G_c}$, respectively.
The encoded syndromes $\boldsymbol{s^{\nu}}$ are given as
\begin{equation}
\label{eqn:encodedSyndrome}
\boldsymbol{s^{\nu}} 
\stackrel{\smash{\scriptscriptstyle (a)}}{=}
\boldsymbol{s} \boldsymbol{G_c}
\stackrel{\smash{\scriptscriptstyle (b)}}{=}
\boldsymbol{e}\boldsymbol{H}^T \boldsymbol{G_c}
\stackrel{\smash{\scriptscriptstyle (c)}}{=}
\boldsymbol{e} \boldsymbol{H^{\nu}}^T,
\end{equation}

Consider all single bit error patterns $\boldsymbol{e}^{(i)}$ as defined in Eq. (\ref{eqn:singleBitErrors}).
Let the encoded syndrome that corresponds to $\boldsymbol{e}^{(i)}$ be $\boldsymbol{s^{\nu}}^{(i)} = \boldsymbol{e}^{(i)} \boldsymbol{H^{\nu}}^T$.
Thus, $\boldsymbol{s^{\nu}}^{(i)}$ is exactly the $i^{th}$ row of $\boldsymbol{H^{\nu}}^T$, i.e., $i^{th}$ column of $\boldsymbol{H^{\nu}}$.
%
%
%
\begin{lemma}\label{lemma:Uni_ErrCrr}
For any error sequence $\boldsymbol{e_t}$ with number of bit errors identical to $e_n$, its encoded syndrome $\boldsymbol{s_t}^{\boldsymbol{\nu}}$ is a linear combination of $e_n$ linearly independent vectors $\boldsymbol{s^{\nu}}^{(i)}$.
\end{lemma}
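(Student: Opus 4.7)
The plan is to mirror the contradiction argument used in the earlier syndrome-uniqueness lemma, inserting one extra step that pulls a putative collision in the \emph{encoded} syndromes back to a collision in the ordinary syndromes of $C$. I would assume, towards contradiction, that $\boldsymbol{s_t^{\nu}}$ is expressible as a $GF(2)$-sum of $e_n$ linearly dependent vectors $\boldsymbol{s^{\nu}}^{(i_1)},\dots,\boldsymbol{s^{\nu}}^{(i_{e_n})}$. Non-triviality of the dependence yields a non-empty subset of these vectors that sums to zero; removing that subset from the support of $\boldsymbol{e_t}$ produces a distinct binary vector $\boldsymbol{e_t^*}$ of weight strictly less than $e_n$ (possibly zero) whose encoded syndrome still equals $\boldsymbol{s_t^{\nu}}$.

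Next, I would translate this encoded-syndrome collision into a plain-syndrome collision using (\ref{eqn:encodedSyndrome}): $\boldsymbol{s_t^{\nu}} = \boldsymbol{s_t}\boldsymbol{G_c}$ and the analogous identity for $\boldsymbol{e_t^*}$. Because $\boldsymbol{G_c}$ is a generator matrix of the LBC $C_c$, its rows are linearly independent over $GF(2)$, so the right-multiplication map $\boldsymbol{s}\mapsto\boldsymbol{s}\boldsymbol{G_c}$ is injective on $GF(2)^{n-k}$; hence $\boldsymbol{s_t} = \boldsymbol{s_t^*}$. Since $\boldsymbol{e_t}$ and $\boldsymbol{e_t^*}$ are distinct error patterns of weight at most $e_n$, they are both correctable by $C$ and must therefore have distinct $C$-syndromes, which is the desired contradiction.

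The proof is thus a short extension of the earlier lemma, and the only genuinely new ingredient is the injectivity of right-multiplication by $\boldsymbol{G_c}$, which is where the role of $\boldsymbol{G_c}$ as a generator matrix (rather than as an arbitrary binary matrix) enters; this is the step I would flag as the one to make explicit. The only other thing to verify is that the subset extracted in the first step is non-empty, so that $\boldsymbol{e_t^*}$ genuinely differs from $\boldsymbol{e_t}$ in weight; this is immediate from non-triviality of the assumed dependence relation and requires no further structural assumption on the columns of $\boldsymbol{H^{\nu}}$.
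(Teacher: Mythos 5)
Your proposal is correct and follows essentially the same route as the paper: the appendix proof also reduces a collision of encoded syndromes to a collision of ordinary $C$-syndromes via the injectivity of right-multiplication by $\boldsymbol{G_c}$ (stated there as a separate ``claim'' that encoded syndromes of correctable patterns are unique), and then invokes uniqueness of syndromes of correctable error patterns to reach the contradiction. Your version merely makes the construction of the lower-weight pattern $\boldsymbol{e_t^*}$ (by cancelling the dependent subset) more explicit than the paper does.
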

See Appendix \ref{append:proof_LemmaUniErrCrr} for proof of Lemma \ref{lemma:Uni_ErrCrr}.

Lemma \ref{lemma:Uni_ErrCrr} allows us to use the result of Lemma \ref{lemma:LinInd} which states that if we have a collection, $\big\{\boldsymbol{s^{\nu}}^{(i)}\big\}_{i{=}x_1}^{x_{e_n}}$, of linearly independent vectors over $GF(2)$. Then, if their $'1'$ and $'0'$ entries are interpreted as real numbers, then they are also linearly independent over $\mathbb{C}$.

Let us interpret the elements of $\boldsymbol{H^{\nu}}$ and $\boldsymbol{e^{(i)}}$ as real numbers. Then, we can write the channel $\boldsymbol{q^a}$ as
\begin{equation}
({\boldsymbol{q}^a})^T = \alpha_1 \boldsymbol{e}^{(1)} + \alpha_2 \boldsymbol{e}^{(2)} + \dots + \alpha_n \boldsymbol{e}^{(n)}
\end{equation}
where $\alpha_i {\in} \mathbb{C}$
and $\sum_{i=1}^n \mathbb{\mathbbm{1}}_{\left\lbrace\alpha_i {\neq} 0\right\rbrace} {\leq} e_n$.
Therefore, each encoded channel syndrome ${(\boldsymbol{y_s^{\nu}})}^T {=} {(\boldsymbol{q}^a)}^T \boldsymbol{H^{\nu}}^T \Longrightarrow \boldsymbol{y_s^{\nu}} {=} \boldsymbol{H^{\nu}} \boldsymbol{q}^a$ is a linear combination of independent vectors in $\mathbb{C}^{m_c}$ (columns of $\boldsymbol{H^{\nu}}$).
Therefore, for all measurable channels $\boldsymbol{q^a_1} {\neq} \boldsymbol{q^a_2}$, we have that $\boldsymbol{y_{s1}^{\nu}} {\neq} \boldsymbol{y_{s2}^{\nu}}$. Therefore, measurements designed based on $\boldsymbol{H^{\nu}}$ are sufficient for beam discovery.

\subsection{Resilience to Errors}

In the next discussion, we are going to show that the encoded syndromes $\boldsymbol{y_s^{\nu}}$ are more tolerant to the occurrence of measurement errors.
Since the mapping functions $\xi()$ finds the correct $\boldsymbol{q^a}$ by using $l^2-$norm minimization methods (this is true for both look-up table and search methods), then it is intuitively beneficial to separate the channel syndrome vectors, in the $l^2-$norm sense, as much as possible.
Thus, we want to show that if two channel syndromes $\boldsymbol{y_{s_1}}$ and $\boldsymbol{y_{s_2}}$ (corresponding to channel vectors $\boldsymbol{q^a_1}$ and $\boldsymbol{q^a_1}$) have distance
$\delta(\boldsymbol{y_{s_1}}, \boldsymbol{y_{s_2}})$,
then their corresponding encoded syndromes are such that
\begin{align}
\delta(\boldsymbol{y_{s_1}^{\nu}}, \boldsymbol{y_{s_2}^{\nu}}) &\geq
\delta(\boldsymbol{y_{s_1}}, \boldsymbol{y_{s_2}}) \\
\Longleftrightarrow \quad
\norm{\boldsymbol{y_{s_1}^{\nu}} - \boldsymbol{y_{s_2}^{\nu}}}  &\geq
\norm{\boldsymbol{y_{s_1}}       - \boldsymbol{y_{s_2}}      }
\end{align}

\begin{proposition}\label{prop:Distance_ErrCrr}
Let $\boldsymbol{G_c}$ be the generator matrix of some LBC code $C$.
Then, if $\boldsymbol{G_c}$ is represented in the standard form and $\boldsymbol{H^{\nu}}$ is generated using $\boldsymbol{G_c}$ (as $\boldsymbol{G}^T_{\boldsymbol{c}} \boldsymbol{H} \pmod {2}$), then
$\norm{\boldsymbol{y_{s_1}^{\nu}} - \boldsymbol{y_{s_2}^{\nu}}}  \geq
\norm{\boldsymbol{y_{s_1}}       - \boldsymbol{y_{s_2}}      }$.
\end{proposition}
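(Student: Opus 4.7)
The plan is to exploit the standard form of $\boldsymbol{G_c}$ to show that the encoded channel syndrome $\boldsymbol{y_s^{\nu}}$ literally contains the original syndrome $\boldsymbol{y_s}$ as a prefix. Once that is established, the inequality follows from a trivial Pythagorean-type bound on the norm of a concatenated vector.

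First I would write $\boldsymbol{G_c}$ in standard form as $\boldsymbol{G_c} = [\boldsymbol{I}_m \mid \boldsymbol{P}]$, where $\boldsymbol{P}$ is an $m \times (m_c - m)$ binary matrix. Transposing gives
\begin{equation}
\boldsymbol{G_c}^T = \begin{pmatrix} \boldsymbol{I}_m \\ \boldsymbol{P}^T \end{pmatrix},
\qquad
\boldsymbol{G_c}^T \boldsymbol{H} = \begin{pmatrix} \boldsymbol{H} \\ \boldsymbol{P}^T \boldsymbol{H} \end{pmatrix}.
\end{equation}
The crucial observation is that the top block is exactly $\boldsymbol{H}$, whose entries are already in $\{0,1\}$, so reducing mod $2$ leaves it untouched. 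Hence
\begin{equation}
\boldsymbol{H^{\nu}} = \begin{pmatrix} \boldsymbol{H} \\ \boldsymbol{\tilde{P}} \end{pmatrix},
\quad \text{where } \boldsymbol{\tilde{P}} \triangleq \boldsymbol{P}^T \boldsymbol{H} \pmod{2}.
\end{equation}

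Next, applying this to any channel $\boldsymbol{q^a}$ and using $\boldsymbol{y_s^{\nu}} = \boldsymbol{H^{\nu}} \boldsymbol{q^a}$ together with $\boldsymbol{y_s} = \boldsymbol{H} \boldsymbol{q^a}$, I would obtain
\begin{equation}
\boldsymbol{y_s^{\nu}} = \begin{pmatrix} \boldsymbol{y_s} \\ \boldsymbol{v_s} \end{pmatrix}, \quad \text{where } \boldsymbol{v_s} \triangleq \boldsymbol{\tilde{P}} \boldsymbol{q^a}.
\end{equation}
For two channels $\boldsymbol{q_1^a}, \boldsymbol{q_2^a}$ with corresponding syndromes $\boldsymbol{y_{s_1}}, \boldsymbol{y_{s_2}}$ and encoded syndromes $\boldsymbol{y_{s_1}^{\nu}}, \boldsymbol{y_{s_2}^{\nu}}$, subtracting termwise gives the concatenation
\begin{equation}
\boldsymbol{y_{s_1}^{\nu}} - \boldsymbol{y_{s_2}^{\nu}} = \begin{pmatrix} \boldsymbol{y_{s_1}} - \boldsymbol{y_{s_2}} \\ \boldsymbol{v_{s_1}} - \boldsymbol{v_{s_2}} \end{pmatrix}.
\end{equation}
Taking the squared $\ell^2$-norm splits additively:
\begin{equation}
\norm{\boldsymbol{y_{s_1}^{\nu}} - \boldsymbol{y_{s_2}^{\nu}}}^2 = \norm{\boldsymbol{y_{s_1}} - \boldsymbol{y_{s_2}}}^2 + \norm{\boldsymbol{v_{s_1}} - \boldsymbol{v_{s_2}}}^2 \geq \norm{\boldsymbol{y_{s_1}} - \boldsymbol{y_{s_2}}}^2,
\end{equation}
and taking square roots yields the claim.

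There is no real obstacle in this argument, which is essentially a one-line observation; the only subtlety worth flagging explicitly is that the $\pmod 2$ reduction acts trivially on the identity block of $\boldsymbol{G_c}^T$, so the original measurements are literally preserved inside the encoded measurement vector. In effect, encoding via a standard-form generator only appends extra coordinates to $\boldsymbol{y_s}$, so distances can only grow. If one wanted a stronger (strictly greater) separation in the typical case, one would need to argue that $\boldsymbol{\tilde{P}}(\boldsymbol{q_1^a} - \boldsymbol{q_2^a}) \neq \boldsymbol{0}$ for distinct measurable channels, which follows from the sufficient-statistic argument already proved in the preceding subsection, but for the stated inequality this is not required.
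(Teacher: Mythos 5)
Your proof is correct and follows essentially the same route as the paper's: both hinge on the standard form of $\boldsymbol{G_c}$ forcing $\boldsymbol{H^{\nu}}$ to contain $\boldsymbol{H}$ as its top block (unaffected by the mod-$2$ reduction), so that the extra rows only add a non-negative term $\norm{\boldsymbol{P}_m(\boldsymbol{q^a_1}-\boldsymbol{q^a_2})}^2$ to the squared distance. The paper phrases this via positive semi-definiteness of $(\boldsymbol{H^{\nu}})^T\boldsymbol{H^{\nu}} - \boldsymbol{H}^T\boldsymbol{H}$ while you phrase it as a Pythagorean splitting of a concatenated vector, but these are the same computation.
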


Proposition \ref{prop:Distance_ErrCrr} shows that by using any appropriate systematic code, we obtain encoded channel syndromes, $\boldsymbol{y_s^\nu}$, that have greater $l^2-$distance than the original syndromes $\boldsymbol{y_s}$. Hence, we increase the space allowed for the noise-corrupted measurement vector $\boldsymbol{u_s}$ to lie in, while still being able to identify its true corresponding, error-free, channel syndrome.
The proof of Proposition \ref{prop:Distance_ErrCrr} is provided in Appendix \ref{append:proof_Prop}.

\section{Performance Evaluation}
\label{Eval}
\begin{figure*}[t]
\centering
\begin{subfigure}{.33\textwidth}\label{fig:beamDProb}
  \centering
  \captionsetup{justification=centering}
\includegraphics[width=0.95\linewidth]{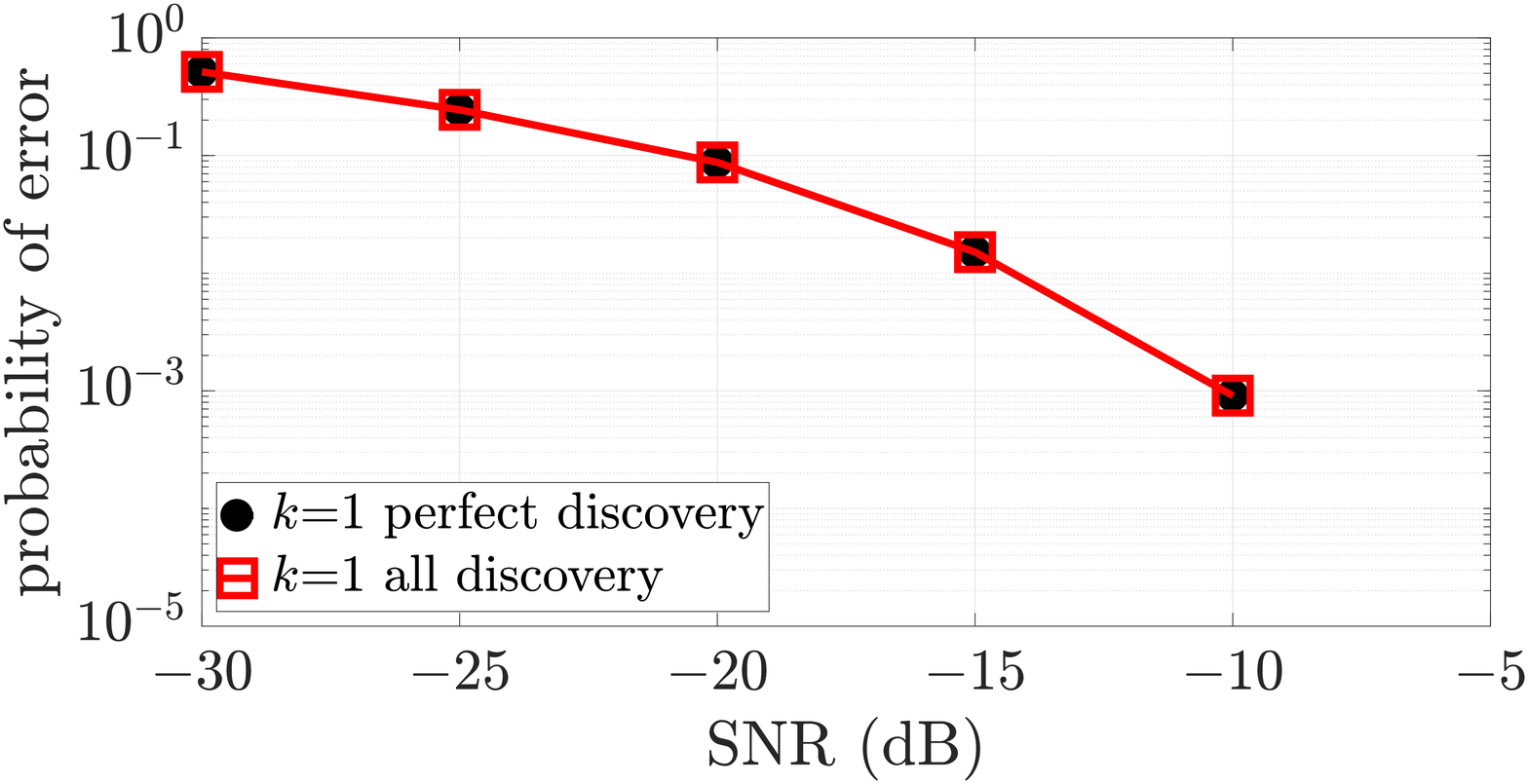}
\captionsetup{justification=centering}
\caption{\small $15 {\times} 15$ channel with $L{=}1$.}
\label{fig:beamDetProb_15x15_9L_1P}
\end{subfigure}%
\begin{subfigure}{.33\textwidth}
  \centering
  \captionsetup{justification=centering}
\includegraphics[width=0.95\linewidth]{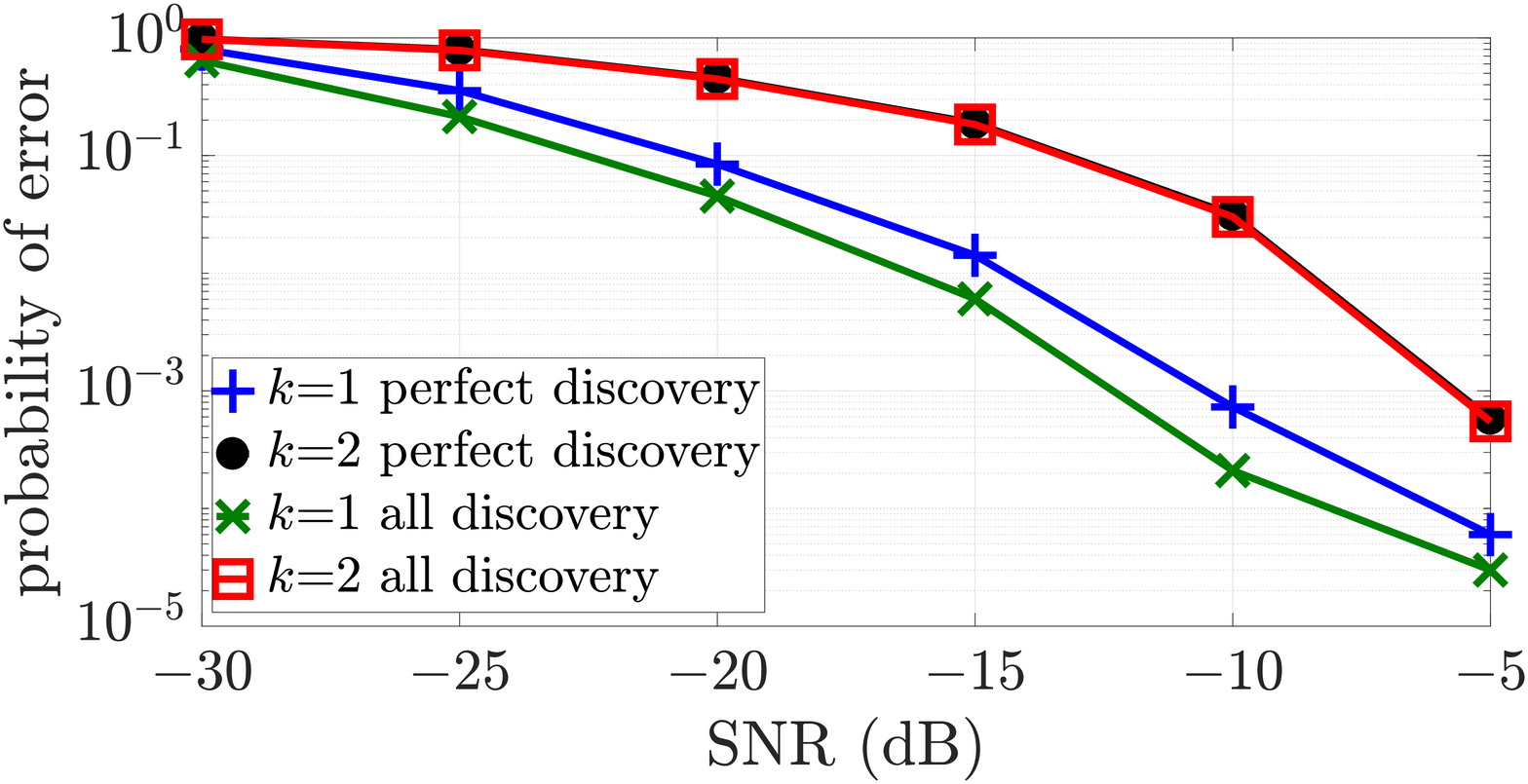}
\captionsetup{justification=centering}
\caption{\small $8 {\times} 8$ channel with $L{=}2$.}
\label{fig:beamDetProb_8x8_9L_2P}
\end{subfigure}%
\begin{subfigure}{.33\textwidth}
  \centering
\includegraphics[width=0.95\linewidth]{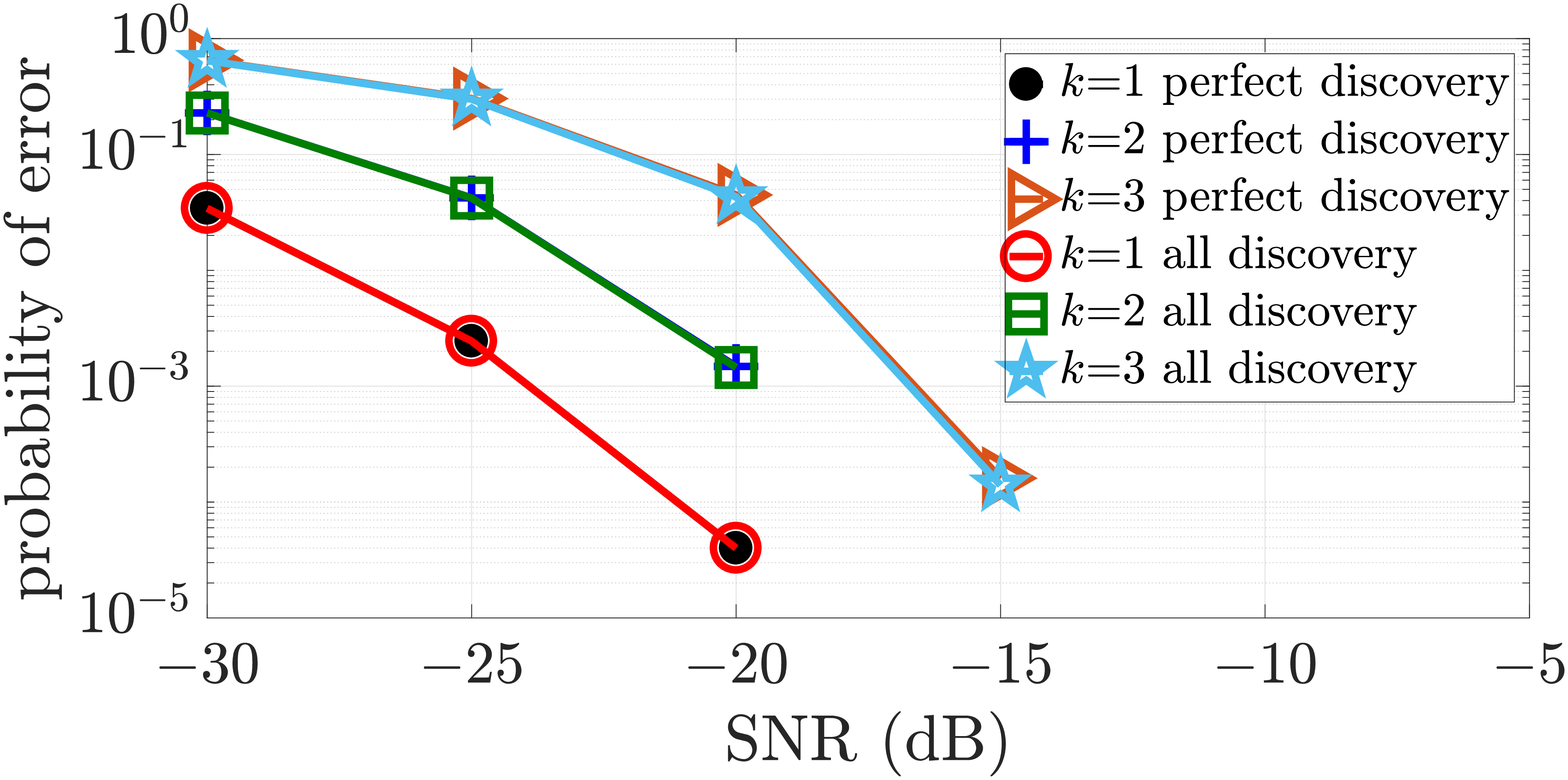}
\captionsetup{justification=centering}
\caption{\small $32 {\times} 32$ channel with $L{=}3$.}
\label{fig:beamDetProb_32x32_9L_3P}
\end{subfigure}%
\captionsetup{justification=centering}
\caption{\small Beam detection probability.}
\end{figure*}
\begin{figure*}[t]
\centering
\begin{subfigure}{.33\textwidth}\label{fig:MSE}
  \centering
  \captionsetup{justification=centering}
\includegraphics[width=0.95\linewidth]{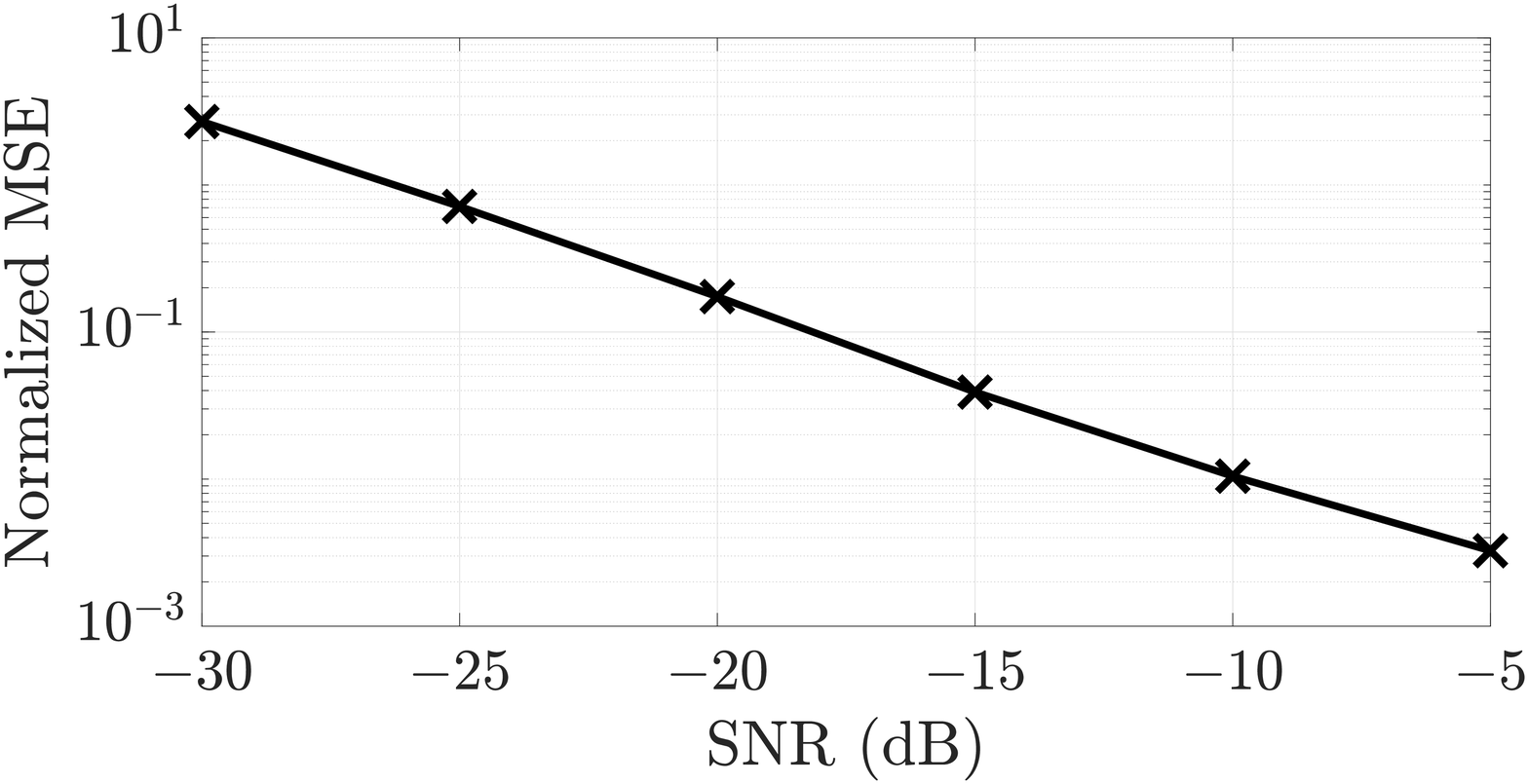}
\caption{\small $15 {\times} 15$ channel with $L{=}1$.}
\label{fig:MSE_15x15_9L_1P}
\end{subfigure}%
\begin{subfigure}{.33\textwidth}
  \centering
  \captionsetup{justification=centering}
\includegraphics[width=0.95\linewidth]{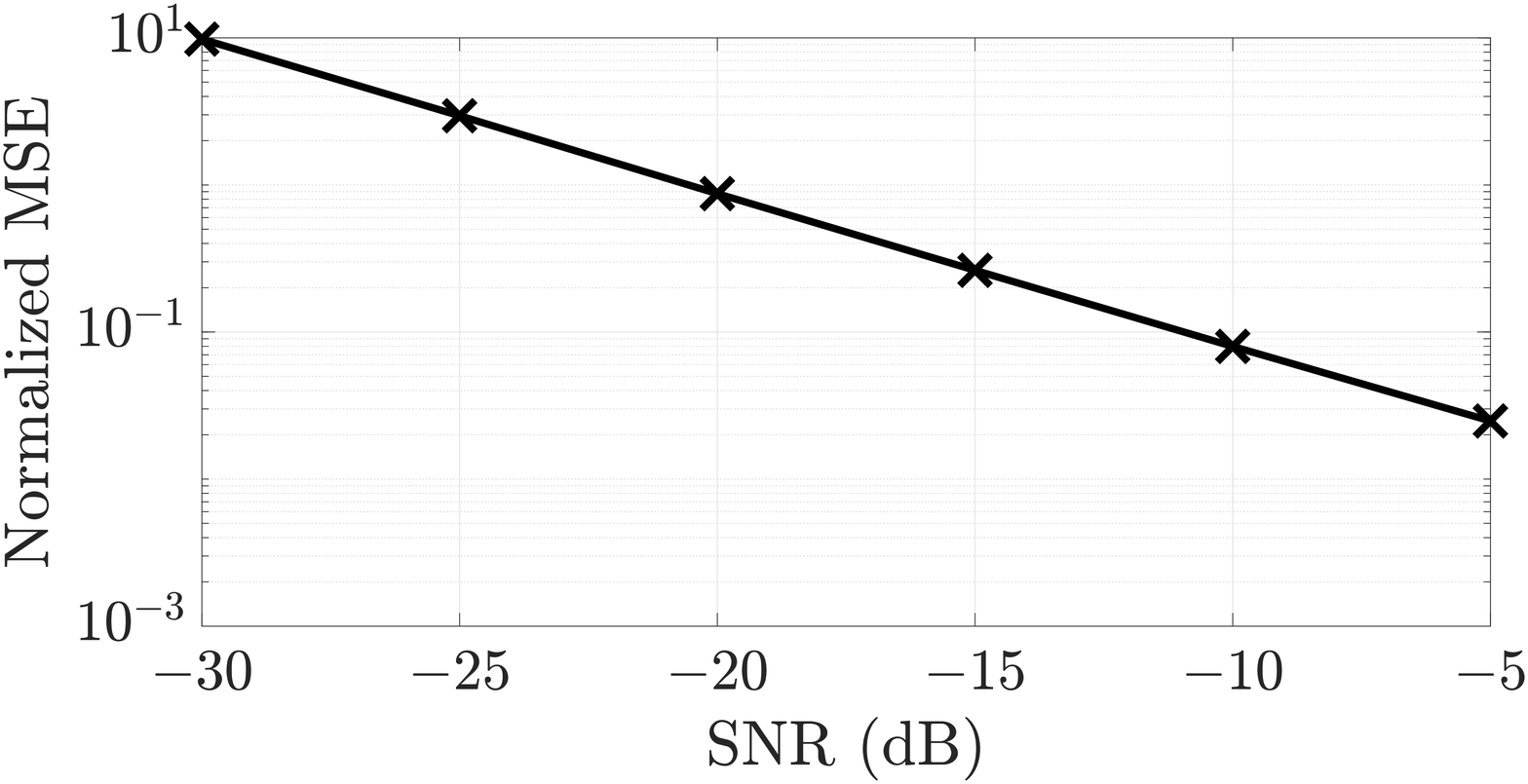}
\caption{\small $8 {\times} 8$ channel with $L{=}2$.}
\label{fig:MSE_8x8_9L_2P} 
\end{subfigure}%
\begin{subfigure}{.33\textwidth}
  \centering
\includegraphics[width=0.95\linewidth]{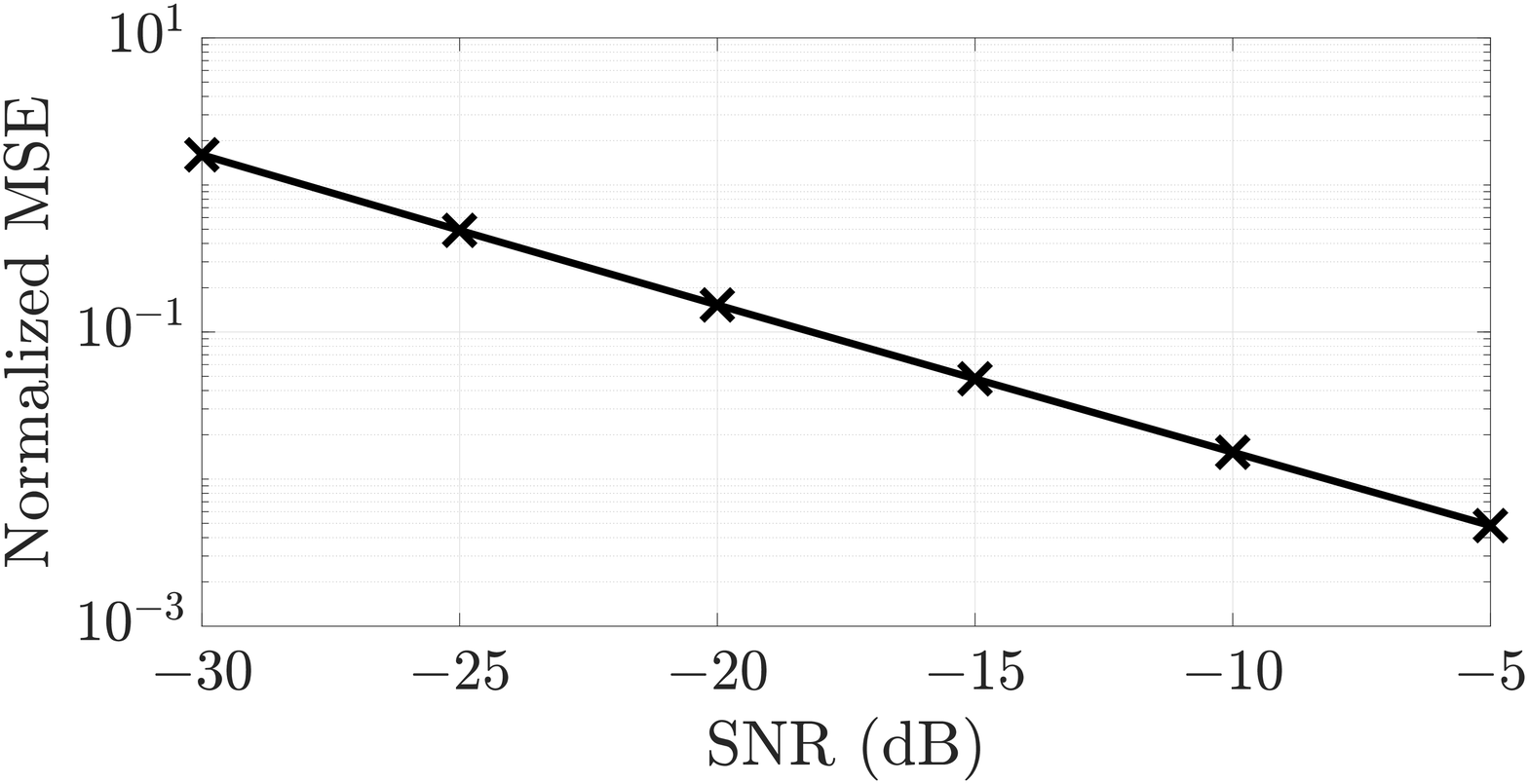}
\captionsetup{justification=centering}
\caption{\small $32 {\times} 32$ channel with $L{=}3$}
\label{fig:MSE_32x32_9L_3P}
\end{subfigure}%
\captionsetup{justification=centering}
\caption{\small Normalized mean squared error (MSE).}
\end{figure*}

In this section we provide extensive simulation results to evaluate the performance of our devised Beam Discovery approach.
We begin this section by introducing the simulation setup and system parameters before defining the performance metrics we use for evaluation.

\subsection{Simulation Setup and Parameters}
We consider an $n_r {\times} n_t$ mm-wave channel with noise power per symbol $N_0 {=} {-}95$dBm, and average path loss $\mu {=} 136dB$.
A maximum number of $L$ (strong) paths exist between TX and RX.
A strong path is defined such that its path attenuation is not higher than $14$dB above the average path loss, i.e. total path loss is at most $150$dB.


Let $\tau$ be the time duration of a pilot sequence of one measurement. For simplicity, let $\tau {=} 1$.
Also, recall that \textit{SNR} is defined for a single path (see equation (\ref{eqn:SNRl})), and that $P$ is the corresponding transmitted power (i.e., per path). Let the total transmitted power be $P_t$, where $P_t$ is an integer multiple of $P$ that depends on the number of combined transmit and receive directions (recall Fig. \ref{fig:RxComb}).
Then, the total energy required for beam discovery is $E {=} m P_t \tau {=} mP_t$, where $m$ is the total number of measurements\footnote{This formula for total energy assumes equal $P_t$ for all measurements. Depending on the employed LBC, this might not always be the case. More generally, we can find the total energy to be: $E = \sum_i m_i P_{t_i}$, where $m_i$ is the number of measurements with total transmit power $P_{t_i}$.}.
Let the normalized energy be $E_t \triangleq \frac{E}{N_0} |\alpha_{\min}/\mu|^2 $.


To map the channel measurements to their corresponding angular channels, we use the search method presented in Section \ref{subsub:SearchMethod}.
Finally, for every simulation scenario, we obtain the average performance across $10^5$ runs.

\subsection{Performance metrics}
To asses the performance of the proposed beam discovery method, we mainly focus on three basic criteria, namely, \textit{\textbf{accuracy of beam discovery}}, \textit{\textbf{number of measurements}}, and \textit{\textbf{accuracy of path gain value estimates}}.
To that end, we use the following performance metrics:
\begin{enumerate}[i)]
\item \textbf{\textit{Number of measurements:}} This represents the number of pilots sent from TX to discover the paths to RX.
\item \textbf{\textit{Probability of strongest $k$ beams discovery:}} This denotes the probability of correctly identifying the directions of $k$ strong reflectors among $L$. There are two cases we consider pertaining to the possibility of the algorithm identifying exactly $k$ directions or more than $k$ directions at the output:\\
\textbf{1) perfect $k$ beam discovery:} where exactly $k$ true paths are discovered with no incorrect paths among them.\\
\textbf{2) all $k$ beam discovery:} where $k$ true paths are discovered with potentially more incorrectly identified paths.
\item \textit{\textbf{Number of incorrect beams:}} Due to the possibility of obtaining a combination of correct and incorrect paths, it is important that we have as few incorrect beams as possible since further refinement would be made easier.
\item \textbf{\textit{Normalized mean squared error (MSE):}} $\frac{\left\Vert \boldsymbol{Q^a} - \widehat{\boldsymbol{Q}}^{\boldsymbol{a}}\right\Vert^2_F}{\left\Vert \boldsymbol{Q^a}\right\Vert^2_F}$. Measurement errors occur in the form of 1) imperfect estimates of path gains and phases, and 2) incorrect beam discovery. Hence, MSE provides an inclusive metric for how close the estimated channel matrix is to the true one.
\end{enumerate}

Measurement errors mainly occur due to two contributing factors. The first is \textit{\textbf{measurement noise}}, and the second is \textit{\textit{\textbf{quantization}}} (recall that we assume the measurements to be quantized using mid-tread ADC quantizers with $2^b{+}1$ levels).\\
In Sections \ref{SinglePath_min} and \ref{MultiPath_min}, we assess the performance of Beam Discovery approach against \textbf{\textit{only}} the effect of measurements noise. We do so by assuming a perfect, infinite resolution ADC.
Then, in Section \ref{sec:quantizationEffect}, we investigate the system performance at different ADC resolution levels.
This separate investigation of sources of errors allows us to understand how each source affects the performance. Thus, enabling full realization of potential gains of Beam Discovery approach.

\subsection{Single-path channels} \label{SinglePath_min}
Consider a $15 {\times} 15$ mm-wave channel with $L {=} 1$ path between TX and RX. Hence, the parity check matrix of $(15,11,3)$ Hamming code can be used to  design both the precoders, $\boldsymbol{f_j}$, and rx-combiners, $\boldsymbol{w_i}$, i.e., $\boldsymbol{H_1}$ and $\boldsymbol{H_2}$, are identical.
Hence, we need a number of TX measurements $m_1$, which is identical to the number of RX measurements $m_2 {=} 15{-}11 {=} 4$. Hence, the total number of measurements is $m {=} 16$. On the other hand, the exhaustive scanning method requires $225$ measurements to inspect every possible TX-RX beam combination. Thus, our approach results in ${\approx} 92.8 \%$ reduction in the required number of measurements.


At different \textit{SNR} values, we plot the \textbf{\textit{probability of error}} curves for: i) \textit{Perfect} beam discovery where only the single strongest path ($k{=}1$) is correctly identified, and ii) \textit{all} beam discovery where the strongest path is correctly identified among potentially other misidentified paths. Fig. \ref{fig:beamDetProb_15x15_9L_1P} shows those curves.
We observe that both curves are on top of each other which indicates that the strongest path is either correctly detected or is completely missed.
Moreover, at all \textit{SNR} values ${\geq} -5$dB, the probability of error is lower than $10^{-5}$, and hence, is not shown here since the shown figures are the averages of $10^5$ simulation runs.

In Fig. \ref{fig:MSE_15x15_9L_1P}, we plot the normalized mean squared error of the channel estimate $\widehat{\boldsymbol{Q}}^{\boldsymbol{a}}$.
The very high values at low signal to noise ratios indicate that $\widehat{\boldsymbol{Q}}^{\boldsymbol{a}}$ has large components at truly zero components in $\boldsymbol{Q^a}$ and/or large components in $\boldsymbol{Q^a}$ are not represented in $\widehat{\boldsymbol{Q}}^{\boldsymbol{a}}$.
Nevertheless, MSE drops steadily fast as \textit{SNR} increases; indicating improved channel estimation.

When we talk about the possibility of misidentified beams for the all beam discovery metric, it is crucial to have a small number of incorrect beam which would facilitate further refinement.
Interestingly, for this scenario, since the error performance of perfect and all beam discovery are the same, we do not have any misidentified paths besides the correct one. Nevertheless, this is not always the case as we will see in further investigated scenarios.

\subsection{Multi-path  Channels}
\label{MultiPath_min}

First, consider an $8 {\times} 8$ channel with $L {=} 2$ paths. For this scenario, we use an $(8,2,5)$ code for both $\boldsymbol{H_1}$ and $\boldsymbol{H_2}$. With this code, a total number, $36$, of measurements is needed for beam discovery. Compared with the $64$ measurements needed for exhaustive scanning, we achieve ${\approx} 43.7 \%$ reduction in the number of measurements under this scenario.

Since we investigate a channel that potentially has two strong paths, we evaluate the probability of error of picking one correct strong path ($k{=}1$) as well as picking two strong paths ($k{=}2$).
Fig. \ref{fig:beamDetProb_8x8_9L_2P} depicts the corresponding probability of error of the perfect and all $k$ beam discovery metrics.
Unlike single-path channels, there exists a wider gap between perfect and all beam discovery curves for the $k{=}1$ metric; which indicates higher vulnerability to picking incorrect paths. On the other hand, for $k{=}2$, the error performance of the perfect and all beam discovery metrics are almost on top of each other.
In Fig. \ref{fig:MSE_8x8_9L_2P}, similar trend for normalized MSE is obtained where MSE steadily drops as \textit{SNR} increases.

Recall that in the $15 {\times} 15$ single-path channel investigation, no incorrect paths were obtained alongside correctly identified strong paths.
This behavior is not replicated for the $8{\times}8$ channel under investigation.
For instance, at ${-}10$dB we obtain a maximum of $2$ misidentified paths.
Further, the probability of obtaining incorrect paths at ${-10}$dB is $\approx 0.04637$.

We further consider a larger array with dimensions $32{\times}32$ and $L{=}3$ paths. We use a $(32,16,8)$ Reed-Muller code to design both $\boldsymbol{H_1}$ and $\boldsymbol{H_2}$. This corresponds to $m_1 {=} m_2 {=} 16$, i.e., total number of measurements $m {=} 265$. This is $75\%$ fewer measurements needed compared to exhaustive scanning which requires $1024$ measurements for beam discovery.

The probability of error for perfect and all $k{=}1,2,3$ beam discovery are shown in Fig. \ref{fig:beamDetProb_32x32_9L_3P}.
We notice a faster decay rate for the probability of error. This behavior is due to the higher gain of the TX and RX antenna arrays; which increases the receive signal to noise ratios compared to small arrays.
The normalized MSE is shown in \ref{fig:MSE_32x32_9L_3P} have similar trend to the previous investigated scenarios.

\subsection{Effect of Quantization} \label{sec:quantizationEffect}

\begin{figure}[t]
\centering
\captionsetup{justification=centering}
\includegraphics[width=0.8\linewidth]{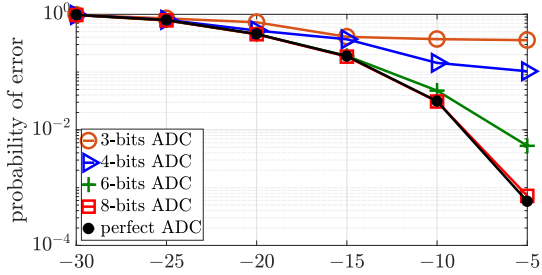}
\caption{\small Perfect $k{=}2$ beam discovery at different quantization resolution ($8 {\times} 8$ channel with $L{=}2$)}
\label{fig:quantization_efect_8x8_2P}
\end{figure}

In this section, both sources of errors are incorporated.
Specifically, we analyze the system performance at different ADC resolution levels.
We will show that very low resolution ADCs can have detrimental effect on performance.
Thus, a natural question that we try to answer in this study is: \textbf{\textit{How far should we increase the resolution of quantizers in order to unlock the full potential of the Beam Discovery approach?}}

Recall that we use mid-tread ADCs with $2^b{+}1$ quantization levels ($b$ stands for the number of bits required to represent the ADC output (approximately)).
We limit our discussion to the case of $8{\times}8$ channels with $L{=}2$ paths since its results are representative of the other previously investigated scenarios.
For clarity and legibility of figures, we only plot the perfect $k{=}2$ beam discovery for $b = 3,4,6,8$ bits i.e., the corresponding number of quanization levels is $9,17,65,257$, respectively. We also plot the corresponding probability of error using a perfect ADC (i.e., $b \rightarrow \infty$). These curves are shown in Fig. \ref{fig:quantization_efect_8x8_2P}.

We find that, at $b{=}3$, the probability of error is very high and does not improve with increasing \textit{SNR}. Hence, quantization is the dominant source of errors.
Then, as the resolution of ADCs increase, significant performance improvement can be achieved.
For instance, while $b{=}4$ still do not produce very good probability of error (with increasing \textit{SNR}), a huge leap in performance can be obtained using ADCs with only $b{=}6$ bits.
Moreover, at $b{=}8$, we approach the performance of perfect ADCs.
Note that we just need $2$ ADCs as per our proposed receiver architecture (see Fig. \ref{fig:architecture}).

\subsection{Error Correction}

\begin{figure}[t]
\centering
\includegraphics[width=0.8\linewidth]{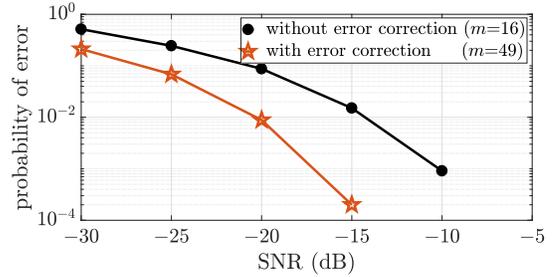}
\caption{\small Beam detection probability ($15 {\times} 15$ channel with $L{=}1$)}
\label{fig:beamDProb_ErrCorr_15x15_9L_1P}
\end{figure}
\begin{figure}[t]
\centering
\includegraphics[width=0.75\linewidth]{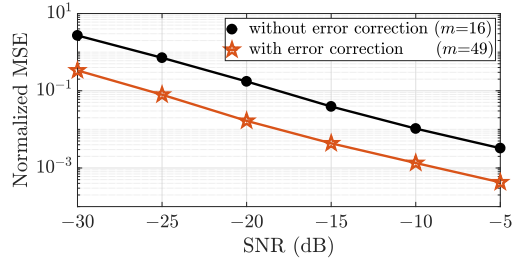}
\caption{\small MSE ($15 {\times} 15$ channel with $L{=}1$)}
\label{fig:MSE_ErrCorr_15x15_9L_1P} 
\end{figure}

\begin{figure}[t]
\centering
\includegraphics[width=0.8\linewidth]{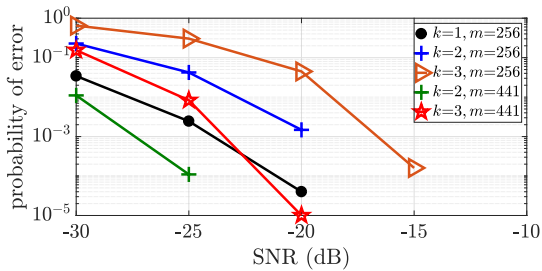}
\caption{\small Beam detection probability ($32 {\times} 32$ channel with $L{=}3$)}
\label{fig:beamDProb_ErrCorr_32x32_9L_3P}
\end{figure}
\begin{figure}[t]
\centering
\includegraphics[width=0.75\linewidth]{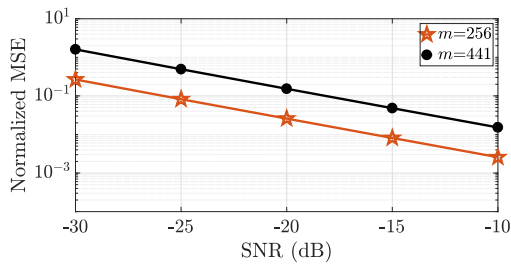}
\caption{\small MSE ($32 {\times} 32$ channel with $L{=}3$)}
\label{fig:MSE_ErrCorr_32x32_9L_3P} 
\end{figure}

In this section, we investigate the performance of Beam Discovery with the \textbf{\textit{error correction}} technique proposed in Section\ref{ErrorCorr}. Recall that error correction is a \textit{channel-coding-like} technique that allows for improving the error performance on the expense of increased number of measurements.
Two different scenarios are investigated; namely, $15{\times}15$ single-path channels, and $32{\times}32$ triple-path channels.

\subsubsection{Single-Path Channel}
Consider the $15 {\times} 15 $ single-path channel we studied in Section \ref{SinglePath_min}.
Recall that we used the parity check matrix of $(15,11,3)$ Hamming code for both $\boldsymbol{H_1}$ and $\boldsymbol{H_2}$ which resulted in syndromes $\boldsymbol{y_s}$ of length $m_1 {=} m_2{=} 4$.
Now, we need to encode sequences of length $4$ into longer sequences $\boldsymbol{y_s^{\nu}}$ using a systematic code. Conveniently, we can use the $(7,4,3)$ Hamming code which maps sequences of length $4$ into sequences of length $7$.
The corresponding $\boldsymbol{H^{\nu}_1}$ and $\boldsymbol{H^{\nu}_2}$ matrices are of size $7 {\times} 15$ and we have that $m_{c_1} {=} m_{c_2} {=} 7$. Hence we have a total number of measurements for Beam Discovery with error correction $m_{c} {=} 49$. This is ${\approx} 78.2\%$ fewer measurements compared to exhaustive scanning. Recall that the number of measurements without error correction is $16$.

The probability of error for perfect $k{=}1$ beam discovery is depicted in Fig. \ref{fig:beamDProb_ErrCorr_15x15_9L_1P}. A notable performance improvement over the $m{=}16$ case is obtained. That is, at the same \textit{SNR}, significantly lower probability of error is achieved. This performance improvement is also reflected in the MSE curves in Fig. \ref{fig:MSE_ErrCorr_15x15_9L_1P}.

\subsubsection{Multi-Path Channel}
For the multi-path scenario, we study the $32 {\times} 32$ channel with $L {= 3}$ paths.
Recall that, in Section\ref{MultiPath_min}, we use a $(32,16,8)$ Reed-Muller code for which the parity check matrices $\boldsymbol{H_1} {=} \boldsymbol{H_2}$ are of size $16 {\times} 32$. Under this setting we obtain $75\%$ reduction in the number of channel measurements compared to exhaustive scanning ($256$ instead of $1024$ measurements).
To add the error correction capability, we encode the channel syndromes using a $(21,16,3)$ code (a subcode of the $(31,26,3)$ Hamming code). We obtain $\boldsymbol{H^{\nu}_1} {=} \boldsymbol{H^{\nu}_2}$ of size $21 {\times} 32$. Thus, $m_{c_1} {=} m_{c_2} {=} 21$, i.e., $m_c{=}441$, which gives a reduction of $\approx 57\%$ in number of measurements compared to exhaustive scanning.

For clarity, we only plot the probability of error for perfect $k{=}1,2,3$ beam discovery shown in Fig. \ref{fig:beamDProb_ErrCorr_32x32_9L_3P}. 
Note that at $m_c{=}441$, the $k{=}1$ perfect beam discovery achieves error probability below $10^{-5}$, hence, it is not shown in Fig. \ref{fig:beamDProb_ErrCorr_32x32_9L_3P}.
We notice a huge performance improvement over the $m{=}265$ case, that is, at fixed \textit{SNR} we obtain at least an order of magnitude improvement in the probability of error.
We also see a corresponding improvement in MSE depicted in Fig. \ref{fig:MSE_ErrCorr_32x32_9L_3P}.


\subsection{Comparison to Exhaustive Scanning}

\begin{figure}[t]
\centering
\includegraphics[width=0.75\linewidth]{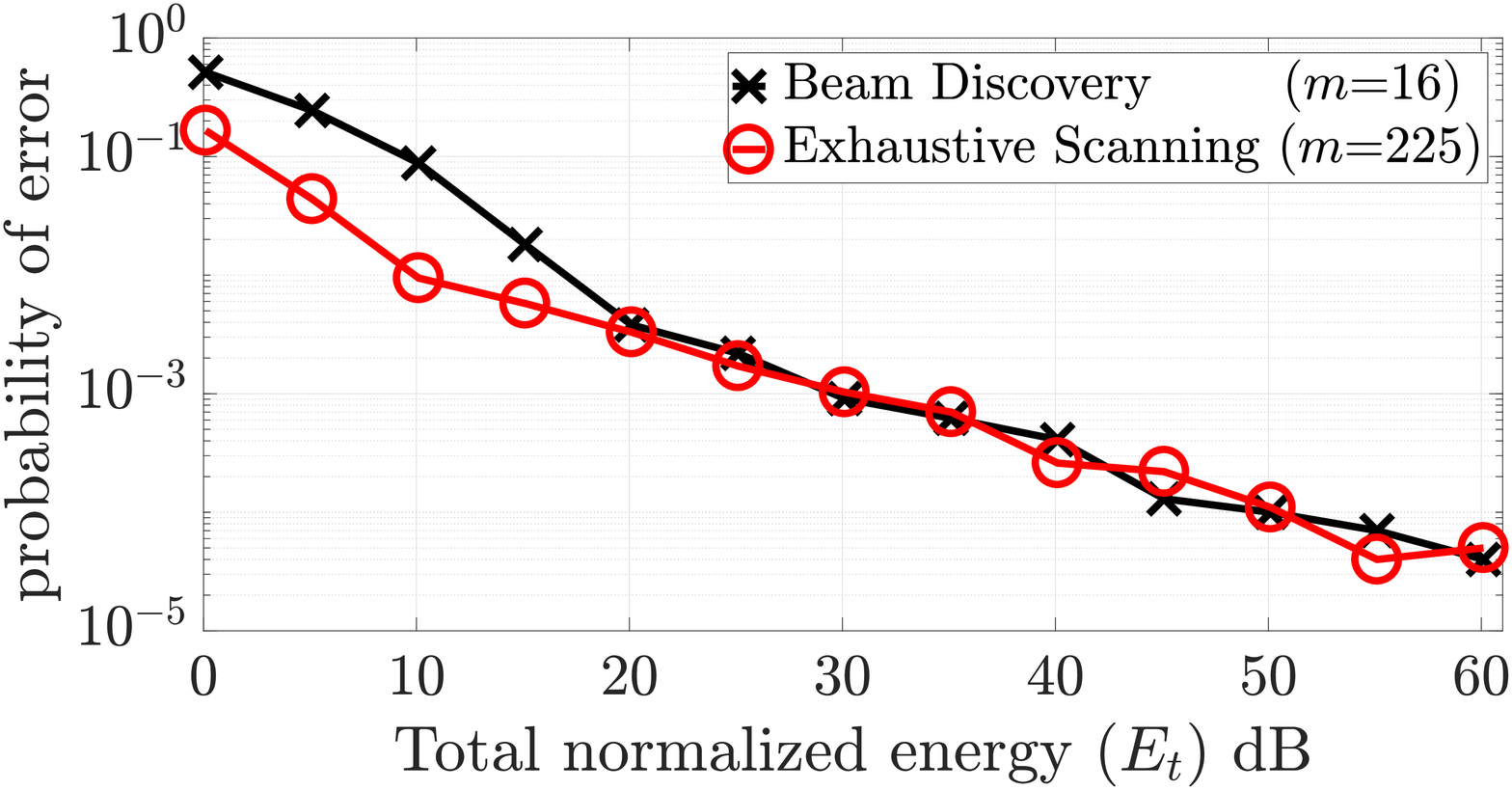}
\captionsetup{justification=centering}
\caption{\small Perfect $k{=}1$ beam discovery.\\
Beam Discovery vs. Scanning ($15 {\times} 15$ channel with $L{=}1$)}
\label{fig:beamDProb_m=16_vs_225_InfL_1P} 
\end{figure}

\begin{figure}[t]
\centering
\includegraphics[width=0.75\linewidth]{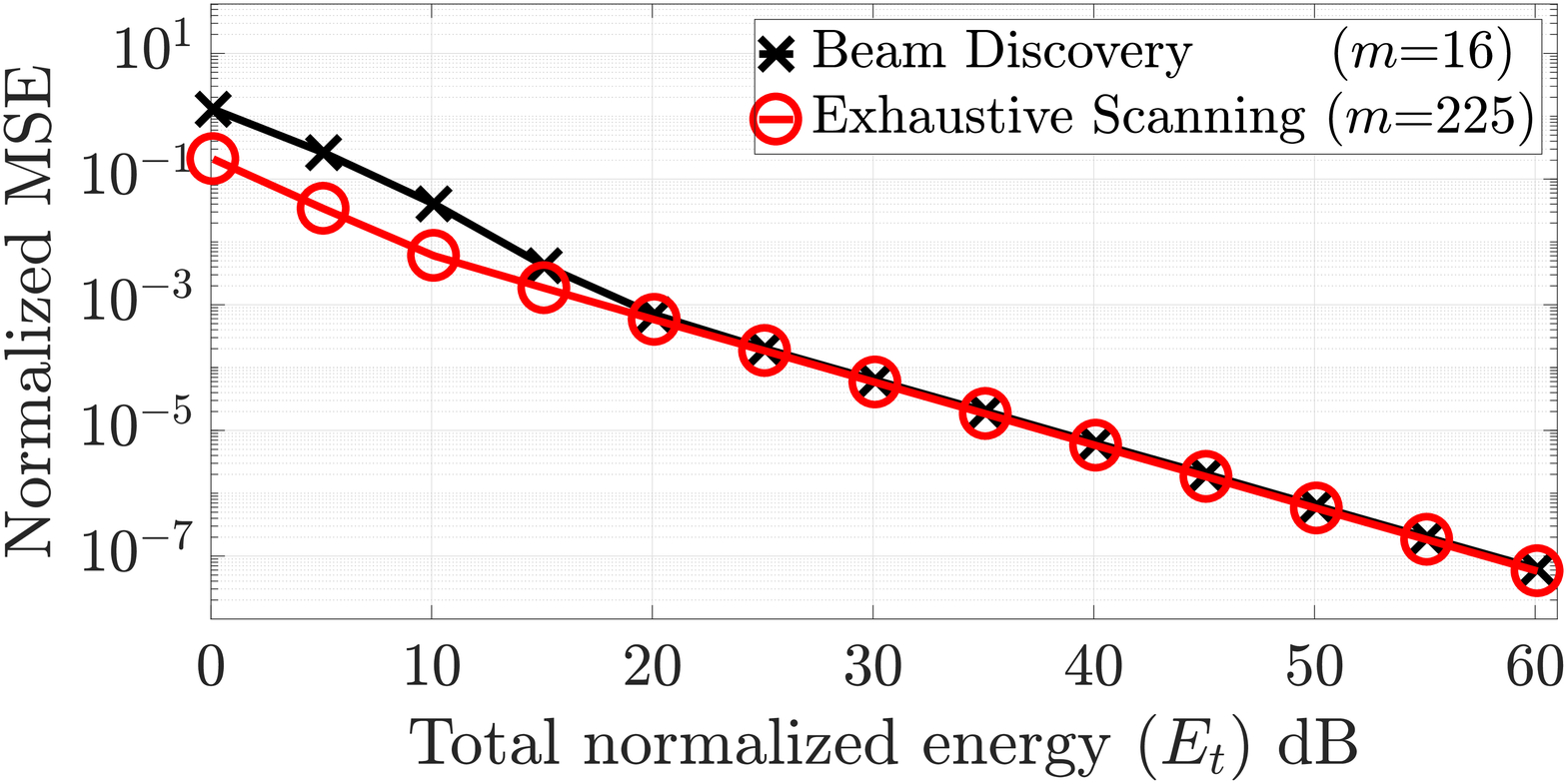}
\captionsetup{justification=centering}
\caption{\small Normalized MSE.
\\Beam Discovery vs. Scanning ($15 {\times} 15$ channel with $L{=}1$)}
\label{fig:MSE_m=16_vs_225_InfL_1P} 
\end{figure}


In the previous analysis, we have shown that our devised approach requires significantly fewer measurements for beam discovery when compared to Exhaustive Scanning.
Although Scanning entails a large number of measurements, its transmit power $P_t$ is kept low since it only requires pilot transmission over one TX-RX beam combination. On the other hand, our Beam Discovery approach requires transmission over multiple TX-RX beam combinations, which necessitates more invested transmission power in order to keep \textit{SNR} equal to the one obtained from Scanning\footnote{In other words, if we keep $P_t$ fixed for both schemes, then Beam Discovery will operate at lower \textit{SNR}. This is common in mm-wave channel estimation.}. 

In this section, we evaluate the performance of our proposed approach vs.  $E_t$, and compare it against Exhaustive Scanning.
Recall that $E_t$ is the total normalized energy required for the beam discovery process.
To neutralize the effect of quantization, we assume that perfect ADCs are used for both schemes.
We limit our discussion to only the $15{\times}15$ single-path channel.
%
%

In Fig. \ref{fig:beamDProb_m=16_vs_225_InfL_1P}, we plot the probability of error for perfect $k{=}1$ beam discovery with $m=16$, and for Exhaustive scanning ($m=225$).
We find that at $E_t$ above $20$dB, we achieve almost the same error performance as Scanning, yet, with $92.8\%$ fewer measurements. This is further emphasized by MSE curves shown in Fig. \ref{fig:MSE_m=16_vs_225_InfL_1P}.

%

\section{Conclusion}
\label{DiscConc}
This work provides a solution for the mm-wave channel estimation problem by exploiting its sparse nature in the angular domain. The proposed solution is a beam discovery technique that is similar to error discovery in channel coding. We show that our proposed technique can significantly reduce the number of measurements required for reliable channel estimation.
Our solution takes into account the size of TX/RX arrays and the sparsity level of the channel. We determine the number of measurements and the design of each measurement in a deterministic way; based on parity check matrices of appropriately selected LBCs.
Under no measurement errors, our solution is guaranteed to find all available beams (paths) between TX and RX. However, due to the presence of channel noise and quantization (ADCs), measurement errors occur, which might cause incorrect beam discovery. Hence, we assess the performance of the proposed scheme under different levels of \textit{SNR} and ADC resolutions.
We further provide a technique for error correction that is also inspired by channel coding.\\
A special case of uncoded discovery within our general coded discovery framework is Exhaustive Scanning.
We compare our solution against Scanning and find that we approach its error performance under the same total energy expenditure.

\appendices
\section{Proof of Lemma \ref{lemma:Uni_ErrCrr}} 
\label{append:proof_LemmaUniErrCrr}
\begin{proof}
We start with the following \textit{claim:} All error syndromes $\boldsymbol{s}^{\boldsymbol{\nu}}$ corresponding to correctable error patterns $\boldsymbol{e} \in \mathcal{E}_C$ are unique. Recall that $\mathcal{E}_C$ is the set of correctable error patterns of code $C$
(Eq. (\ref{eqn:setOfCorrectableErrors})).
Suppose that the claim is true.
Then, assume towards contradiction that $\exists \: \boldsymbol{s}_t^{\boldsymbol{\nu}}$ given by a linear combination of $e_n$ linearly \textbf{dependent} vectors $\boldsymbol{s^{\nu}}^{(i)}$. Hence, $\exists$ another vector $\boldsymbol{s}_t^{*\boldsymbol{\nu}}$ composed of linear combination of only linearly independent vectors such that $\boldsymbol{s}_t^{\boldsymbol{\nu}} = \boldsymbol{s}_t^{*\boldsymbol{\nu}}$.\\
So, $\boldsymbol{s}_t^{*\boldsymbol{\nu}}$ corresponds to an error pattern $\boldsymbol{e}_t^*$ with number of errors $< e_n \Rightarrow \boldsymbol{e}_t^* \in \mathcal{E}_C$.\\
Since $\boldsymbol{e}_t \neq \boldsymbol{e}_t^*$ and both $\boldsymbol{e}_t, \boldsymbol{e}_t^* \in \mathcal{E}_C$, then by our claim
$\Rightarrow \boldsymbol{s}_t^{\boldsymbol{\nu}} \neq \boldsymbol{s}_t^{*\boldsymbol{\nu}}$. Hence, we arrive at  a contradiction.
\end{proof}
\vspace{-5pt}
It remains to show that our claim is true.
\vspace{-5pt}
\begin{proof}[Proof of claim]
Recall that both codes $C$ and $C_c$ are binary linear block codes. The codewords of $C$ span a linear subspace with dimension $k$ of the vector space $\{GF(2)\}^{n}$ and the codewords of $C_c$ span a linear subspace with dimension $m$ of the vector space $\{GF(2)\}^{m_c}$.
Hence, all operations performed using parity check and generator matrices, codewords, error sequences and error syndromes are all linear over $GF(2)$.

By inspection of Eq. (\ref{eqn:encodedSyndrome}), we have that:
\begin{itemize}
\item Equality $(a)$ follows by our definition of encoding the error syndromes $\boldsymbol{s}$ into encoded syndromes $\boldsymbol{s^{\nu}}$.
Since $\boldsymbol{G_c}$ is $m{\times} m_c$ with linearly independent rows \cite{van2012introduction}, we have that if $\boldsymbol{s}_1 \neq \boldsymbol{s}_2$, then $\boldsymbol{s_1^{\nu}} \neq \boldsymbol{s_2^{\nu}}$ with $\boldsymbol{s_i^{\nu}}$ being the corresponding encoded $\boldsymbol{s_i}$ sequence.

\item Equality $(b)$ follows by substituting for $\boldsymbol{s}$ with its equivalent linear operation $\boldsymbol{eH}^T$ (recall Footnote \ref{LBC}).
We also have that $\forall \boldsymbol{e} \in \mathcal{E}_C  \; \exists! \boldsymbol{s}: \boldsymbol{s} \in \{GF(2)\}^{m}$.

\item Equality $(c)$ follows by linearity of $\boldsymbol{H}$ and $\boldsymbol{G_c}$.
\end{itemize} 

Hence, $\forall \boldsymbol{e_1}, \boldsymbol{e_2} {\in} \mathcal{E}_C$,  $\boldsymbol{e_1} {\neq} \boldsymbol{e_2} \Leftrightarrow \boldsymbol{s_1} {\neq} \boldsymbol{s_2} \Leftrightarrow \boldsymbol{s_1^{\nu}} {\neq} \boldsymbol{s_s^{\nu}}$.
\end{proof}

\section{Proof of Proposition \ref{prop:Distance_ErrCrr}}
\label{append:proof_Prop}
\begin{proof}
We need to show that
\begin{align}
\norm{\boldsymbol{y_{s_1}^{\nu}} - \boldsymbol{y_{s_2}^{\nu}}}  &\geq
\norm{\boldsymbol{y_{s_1}}       - \boldsymbol{y_{s_2}}      }   \\
\Longleftrightarrow \norm{\boldsymbol{H^{\nu}} \boldsymbol{q^a_1} - \boldsymbol{H^{\nu}} \boldsymbol{q^a_2}}  &\geq
\norm{\boldsymbol{H} \boldsymbol{q^a_1} - \boldsymbol{H} \boldsymbol{q^a_2}} 
\\
\Longleftrightarrow \norm{\boldsymbol{H^{\nu}} \left( \boldsymbol{q^a_1} - \boldsymbol{q^a_2} \right) }^2  &\geq \norm{\boldsymbol{H} \left( \boldsymbol{q^a_1} - \boldsymbol{q^a_2} \right)}^2 \label{eqn:normSquare}
\end{align}
Let $\boldsymbol{v} = \boldsymbol{q^a_1} - \boldsymbol{q^a_2} $,
then, Eq. (\ref{eqn:normSquare}) is true if and only if
\begin{align}
\left( \boldsymbol{H^{\nu}} \boldsymbol{v} \right)^T \left( \boldsymbol{H^{\nu}} \boldsymbol{v} \right)  &\geq \left( \boldsymbol{H^{\nu}} \boldsymbol{v} \right)^T \left( \boldsymbol{H} \boldsymbol{v} \right)  \\
\Longleftrightarrow \boldsymbol{v}^T  \left( \boldsymbol{H^{\nu}} \right)^T \boldsymbol{H^{\nu}} \boldsymbol{v}  &\geq \boldsymbol{v}^T  \boldsymbol{H}^T \boldsymbol{H} \boldsymbol{v}
\end{align}
\vspace{-20pt}
\begin{align}
\Longleftrightarrow \boldsymbol{v}^T  \left( \left( \boldsymbol{H^{\nu}} \right)^T \boldsymbol{H^{\nu}} - \boldsymbol{H}^T \boldsymbol{H} \right) \boldsymbol{v}  &\geq 0 \label{eqn:prop4_subs} \\
\Longleftrightarrow \quad \left( \boldsymbol{H^{\nu}} \right)^T \boldsymbol{H^{\nu}} - \boldsymbol{H}^T \boldsymbol{H}   &\succeq 0, \label{eqn:prop4_subs}
\end{align}
i.e., $\left( \boldsymbol{H^{\nu}} \right)^T \boldsymbol{H^{\nu}} {-} \boldsymbol{H}^T \boldsymbol{H}$ is positive semi-definite.
Suppose $\boldsymbol{G_c}$ is of size $m {\times} m_c$.
Since $\boldsymbol{G_c}$ is in standard form\footnote{For any LBC code $C1$, we can find an equivalent \textit{systematic} code $C2$ using row reduction and column reordering operations \cite{van2012introduction}. A systematic code is an LBC with generator matrix structure given by Eq. (\ref{eqn:G_stdForm}).} (by assumption), then it can be written in block matrix representation as
\begin{equation}
\label{eqn:G_stdForm}
\boldsymbol{G_c} =
\begin{pmatrix}
\boldsymbol{I} & \vline & \boldsymbol{P}
\end{pmatrix},
\end{equation}
where $\boldsymbol{I}$ is the $m {\times} m$ identity matrix and $\boldsymbol{P}$ is of size $m {\times} m_c{-}m$.
Then, we can find $\boldsymbol{H^{\nu}}$ to be given as
\begin{align}
\boldsymbol{H^{\nu}} &= \boldsymbol{G_c}^T \boldsymbol{H} \pmod {2} \; \;
= 
\begin{pmatrix}
\boldsymbol{I}  \\ \boldsymbol{P}^T
\end{pmatrix}
\boldsymbol{H} \pmod {2} \\
&= 
\begin{pmatrix}
\boldsymbol{H}  \\ \boldsymbol{P}^T \boldsymbol{H}
\end{pmatrix}
 \pmod {2}
= 
\begin{pmatrix}
\boldsymbol{H}  \\ \boldsymbol{P}_m
\end{pmatrix}
\end{align}
where $\boldsymbol{P}_m = \boldsymbol{P}^T \boldsymbol{H} \pmod {2}$.
Hence, we have that
\begin{equation}
\label{eqn:_blockMatrix}
\left( \boldsymbol{H^{\nu}} \right)^T \boldsymbol{H^{\nu}} = \boldsymbol{H}^T \boldsymbol{H} + \boldsymbol{P}^T_m \boldsymbol{P}_m
\end{equation}
Hence, we get that $\forall \boldsymbol{v} \in \mathbb{C}^n$,
\begin{align}
\boldsymbol{v}^T  \left( \left( \boldsymbol{H^{\nu}} \right)^T \boldsymbol{H^{\nu}} - \boldsymbol{H}^T \boldsymbol{H} \right)\boldsymbol{v}
&=
\boldsymbol{v}^T  \boldsymbol{P}^T_m \boldsymbol{P}_m \boldsymbol{v} \\
&= \norm{\boldsymbol{P}_m \boldsymbol{v}}^2 \geq 0
\end{align}
Hence, Eq. (\ref{eqn:prop4_subs}) is satisfied which completes the proof.
\end{proof}





\end{document}